\PassOptionsToPackage{unicode}{hyperref}
\PassOptionsToPackage{hyphens}{url}
\PassOptionsToPackage{dvipsnames,svgnames,x11names}{xcolor}
\documentclass[
  12pt]{article}

\usepackage{amsmath,amssymb}
\usepackage{iftex}
\ifPDFTeX
  \usepackage[T1]{fontenc}
  \usepackage[utf8]{inputenc}
  \usepackage{textcomp} 
\else 
  \usepackage{unicode-math}
  \defaultfontfeatures{Scale=MatchLowercase}
  \defaultfontfeatures[\rmfamily]{Ligatures=TeX,Scale=1}
\fi
\usepackage{lmodern}
\ifPDFTeX\else  
\fi
\IfFileExists{upquote.sty}{\usepackage{upquote}}{}
\IfFileExists{microtype.sty}{
  \usepackage[]{microtype}
  \UseMicrotypeSet[protrusion]{basicmath} 
}{}
\makeatletter
\@ifundefined{KOMAClassName}{
  \IfFileExists{parskip.sty}{%
    \usepackage{parskip}
  }{
    \setlength{\parindent}{0pt}
    \setlength{\parskip}{6pt plus 2pt minus 1pt}}
}{
  \KOMAoptions{parskip=half}}
\makeatother
\usepackage{xcolor}
\makeatletter
\ifx\paragraph\undefined\else
  \let\oldparagraph\paragraph
  \renewcommand{\paragraph}{
    \@ifstar
      \xxxParagraphStar
      \xxxParagraphNoStar
  }
  \newcommand{\xxxParagraphStar}[1]{\oldparagraph*{#1}\mbox{}}
  \newcommand{\xxxParagraphNoStar}[1]{\oldparagraph{#1}\mbox{}}
\fi
\ifx\subparagraph\undefined\else
  \let\oldsubparagraph\subparagraph
  \renewcommand{\subparagraph}{
    \@ifstar
      \xxxSubParagraphStar
      \xxxSubParagraphNoStar
  }
  \newcommand{\xxxSubParagraphStar}[1]{\oldsubparagraph*{#1}\mbox{}}
  \newcommand{\xxxSubParagraphNoStar}[1]{\oldsubparagraph{#1}\mbox{}}
\fi
\makeatother

\usepackage{longtable,booktabs,array}
\usepackage{calc} 
\usepackage{etoolbox}
\makeatletter
\patchcmd\longtable{\par}{\if@noskipsec\mbox{}\fi\par}{}{}
\makeatother
\IfFileExists{footnotehyper.sty}{\usepackage{footnotehyper}}{\usepackage{footnote}}
\makesavenoteenv{longtable}
\usepackage{graphicx}
\makeatletter
\def\maxwidth{\ifdim\Gin@nat@width>\linewidth\linewidth\else\Gin@nat@width\fi}
\def\maxheight{\ifdim\Gin@nat@height>\textheight\textheight\else\Gin@nat@height\fi}
\makeatother
\setkeys{Gin}{width=\maxwidth,height=\maxheight,keepaspectratio}
\makeatletter
\def\fps@figure{htbp}
\makeatother

\makeatletter
\@ifpackageloaded{caption}{}{\usepackage{caption}}
\AtBeginDocument{%
\ifdefined\contentsname
  \renewcommand*\contentsname{Table of contents}
\else
  \newcommand\contentsname{Table of contents}
\fi
\ifdefined\listfigurename
  \renewcommand*\listfigurename{List of Figures}
\else
  \newcommand\listfigurename{List of Figures}
\fi
\ifdefined\listtablename
  \renewcommand*\listtablename{List of Tables}
\else
  \newcommand\listtablename{List of Tables}
\fi
\ifdefined\figurename
  \renewcommand*\figurename{Figure}
\else
  \newcommand\figurename{Figure}
\fi
\ifdefined\tablename
  \renewcommand*\tablename{Table}
\else
  \newcommand\tablename{Table}
\fi
}

\@ifpackageloaded{float}{}{\usepackage{float}}
\floatstyle{ruled}
\@ifundefined{c@chapter}{\newfloat{codelisting}{h}{lop}}{\newfloat{codelisting}{h}{lop}[chapter]}
\floatname{codelisting}{Listing}

\makeatother
\makeatletter
\@ifpackageloaded{caption}{}{\usepackage{caption}}
\@ifpackageloaded{subcaption}{}{\usepackage{subcaption}}
\makeatother

\ifLuaTeX
  \usepackage{selnolig}  
\fi
\usepackage{bookmark}

\usepackage[autostyle]{csquotes}
\usepackage{arydshln}
\usepackage{amsthm}
\newtheorem{theorem}{Theorem}[section]
\newtheorem{lemma}[theorem]{Lemma}
\newtheorem{proposition}{Proposition}
\usepackage{enumerate} 

\IfFileExists{xurl.sty}{\usepackage{xurl}}{} 
\hypersetup{
  pdftitle={From Unsupervised to Guided Clustering: A Variational Implementation},
  pdfauthor={Violaine Courrier, Christophe Biernacki},
  pdfkeywords={clustering, variational inference},
  colorlinks=true,
  linkcolor={blue},
  filecolor={Maroon},
  citecolor={Blue},
  urlcolor={Blue},
  pdfcreator={LaTeX via pandoc}}

\newcommand{\anon}{1}

\begin{document}

\def\spacingset#1{\renewcommand{\baselinestretch}%
{#1}\small\normalsize} \spacingset{1}


\if1\anon
{
  \title{\bf From Unsupervised to Guided Clustering: \\ A Variational Implementation}
  \author{Violaine Courrier\thanks{Corresponding author: violaine.courrier@inria.fr}\\
    Withings, Inria, Université de Lille, CNRS\\
    and \\
    Christophe Biernacki \\
    Inria, Université de Lille, CNRS}
  \date{}
  \maketitle
} \fi

\if0\anon
{
  \bigskip
  \bigskip
  \bigskip
  \begin{center}
    {\LARGE\bf Title}
\end{center}
  \medskip
} \fi

\bigskip
\begin{abstract}
Clustering is viewed as an unsupervised technique, but in practice it requires guidance to uncover meaningful structures. We formalize this with guided clustering, a paradigm that uses a guiding variable to steer the discovery process, and introduce the Guided Clustering Variational Autoencoder (GCVAE) as its deep generative realization. GCVAE learns a latent space structured as a Gaussian Mixture Model by optimizing a variational objective that forces the representation to be maximally informative about the guiding variable.  This framework allows the resulting clustering to be reoriented by changing the guiding variable, yielding clusters that are meaningful for the specified context. Experiments on public (MNIST-SVHN) and proprietary connected health devices data demonstrate GCVAE's ability to discover coherent and task-relevant clusters in complex settings.\end{abstract}

\noindent%
{\it Keywords:} clustering, variational inference
\vfill
%
%
%
\newpage
\spacingset{1.8} 

\section{Introduction}\label{sec-intro}

Clustering is never a purely unsupervised task. In practice, analysts implicitly guide the discovery process through choices of parameters, distance metrics, or feature selection to find \enquote{interesting} results. This hands-on necessity stems from a more fundamental challenge: the inherent ambiguity of the clustering task itself. A single dataset can contain multiple, equally valid partitions, and the most meaningful one depends entirely on the analytical goal. Without an explicit objective, the search for a relevant partition relies on an often informal trial-and-error guidance. We argue that this external knowledge should not be an afterthought, but a formal component of the model definition.

We propose to formalize this via the so-called \textit{guided clustering}. In this new paradigm, the guiding variable is not the end goal but rather the lens through which we discover the most relevant and coherent partition of the data. We propose a specific implementation of this novel guided clustering approach using a deep generative model that learns to compress the input into a representation that is simultaneously organized into a discrete mixture of clusters and optimized to be maximally predictive of the guiding variable. The Guided Clustering Variational Autoencoder (GCVAE) model directly addresses the limitations of prior works by learning non-linear, structurally meaningful clusters within a compact space tailored by the guiding variable. 

\section{Related works}

The inherent ambiguity of unsupervised clustering often necessitates guidance to discover partitions that are relevant to a specific analytical context. The guiding principle can be broadly categorized based on its source: either internal to the data itself or provided by an external, contextual variable.

Internal guidance seeks to find structure by modifying the input data $\boldsymbol{x}$. The motivation is that not all features are equally relevant. This has led to variable selection techniques in model-based clustering that identify a feature subset to best reveal latent groups \cite{var_selection, maugis_variable_2009,maugis_variable_2011}. More advanced frameworks, like Multi-Partition Clustering, concurrently discover multiple valid partitions, each defined by a different feature subset \cite{marbac2018tractablemultipartitionsclustering}. While powerful, these methods are fundamentally introspective, defining \enquote{interesting} clusters based solely on the statistical properties of the input data $\boldsymbol{x}$.

In contrast, external guidance aligns the clustering process with an extrinsic goal. The necessity of utilizing auxiliary information to resolve the ambiguity of unsupervised representation learning has been formally established in frameworks such as Identifiable VAEs (iVAE) \cite{khemakhem2020variational}. However, while iVAE leverages auxiliary variables to recover continuous latent factors (disentanglement), our focus is on discovering discrete structures. In this context, the most direct form of guidance is semi-supervised clustering, where the discovery is steered by observing a subset of the actual cluster labels \cite{kingmasemisupervised, maaloesemisupervised} or pairwise constraints \cite{basu_constrained_2008}. However, this approach inherently requires partial access to the ground truth partition, which is often unavailable in exploratory settings. When the cluster labels are unobserved, guidance must instead come from a contextual variable.

The most explicit form of this approach, and the focus of this work, is predictive clustering \cite{goc}. This paradigm reframes clustering as a tool for supervised discovery, where the objective is to cluster $\boldsymbol{x}$ into groups that are predictive of a target variable $\boldsymbol{y}$. The pursuit of this objective has led to the development of diverse methods. For instance, the Predictive Clustering Trees (PCTs), which adapt decision tree algorithms to recursively partition the joint data space \cite{clusteringtrees, obliquepredictiveclusteringtrees}. Another distinct, probabilistic approach is the family of Finite Mixture of Regression (FMR) models, which assume the data arises from subpopulations each governed by its own regression model linking $\boldsymbol{x}$ to $\boldsymbol{y}$ \cite{fmr, marbac_simultaneous_2022}.

We distinguish our proposed guided clustering paradigm from standard predictive clustering, where the primary metric is prediction accuracy and low performance is typically viewed as a failure. In contrast, guided clustering (our proposal) views the guiding variable not as a target to be perfectly predicted, but as a lens to steer the discovery process. From this perspective, the goal is meaningful organization: even a weak predictor can successfully structure the latent space into clusters that are relevant to the context defined by $\boldsymbol{y}$.

Despite their conceptual differences, these methods have a common limitation: they operate directly in the raw feature space $\boldsymbol{x}$. This direct approach struggles with the curse of dimensionality and fails to capture complex non-linearities. Specifically, PCTs are constrained by greedy, axis-aligned splits, while FMRs are limited by strong parametric assumptions that are often too rigid for real-world data. 
This challenge motivates a paradigm shift towards deep representation learning, mapping data $\boldsymbol{x}$ to a compact latent space $\boldsymbol{z}$ to overcome the limitations of operating in the raw data space. More specifically, we adopt a deep generative modeling perspective. The central idea is to discover latent clusters in $\boldsymbol{x}$ that are generative of the target $\boldsymbol{y}$. This requires a framework capable of learning a compact, non-linear representation of the data that is structured for this generative-predictive task. The Variational Autoencoder (VAE) \cite{vae} serves as a cornerstone for such representation learning, inspiring a diverse range of models.
Models like VaDE \cite{VADE} and GMVAE \cite{GMVAE} learn latent representations for clustering. However, their objective is to reconstruct $\boldsymbol{x}$, meaning the latent space is structured to preserve information about $\boldsymbol{x}$ alone. Architectures like Conditional VAEs (CVAEs) \cite{CVAE}, Multimodal VAEs (MVAEs) \cite{mvae}, and Characteristic Capturing VAEs (CCVAEs) \cite{ccvae} successfully incorporate external information; however, it is used to condition the generation of $\boldsymbol{x}$ or to learn a joint space across modalities.

No existing paradigm fully synthesizes deep representation learning with the specific guided clustering goal. To bridge this gap, we draw inspiration from the Variational Information Bottleneck (VIB) \cite{variationalinformationbottleneck}. This framework provides a principled way to extract a representation that is maximally relevant to a target variable while filtering out unrelated noise. However, VIB views the latent space as a compressed encoding computed downstream from the input. Our approach adapts this information-theoretic balance into a deep generative framework. Instead of merely filtering the input, we assume the latent clusters are the pre-existing underlying structure that generates the guiding variable. This shift allows us to use the input not just to predict a target, but to uncover the latent groups that caused it. We realize this paradigm with the so-called Guided Clustering Variational AutoEncoder (GCVAE), which structures the latent space as a probabilistic mixture of components. By optimizing this generative structure to be maximally informative about the guiding variable, GCVAE forces the representation to be both predictively relevant and organized into meaningful clusters.

\section{Contribution}
Our primary contributions are:
\begin{itemize}
    \item We propose and formalize guided clustering, a framework that reframes the role of contextual information in unsupervised learning. It shifts the objective from using clusters to predict a variable, to using a variable which actively guides the discovery of meaningful data partitions.
    \item We introduce the Guided Clustering Variational Autoencoder (GCVAE), a deep generative model that operationalizes our paradigm. Its core technical innovation is the integration of a probabilistic mixture structure directly into a guided variational framework, forcing the model to learn a latent space that is both clustered and maximally informative about the guiding variable.
    \item Empirical validation on open-source and private real-world datasets, demonstrating that our model discovers coherent clusters, even in high-dimensional and non-linear settings.
\end{itemize}

The remainder of the paper is organized as follows. Section \ref{sec:model} details the generative process, inference model, and objective function of GCVAE. In Section \ref{sec:experiments}, we present our comprehensive experimental evaluation, and finally, we conclude in Section \ref{sec:ccl}.

\section{The variational implementation of the model}\label{sec:model}

To operationalize the guided clustering paradigm, we propose the Guided Clustering Variational Autoencoder (GCVAE), a deep generative model designed to discover a latent cluster structure in data $\boldsymbol{x}$ that is maximally informative for a guiding variable $\boldsymbol{y}$.

The model architecture is built on two core principles. First, it uses an inference model (encoder) to learn a compressed latent representation $\boldsymbol{z}$ of the input $\boldsymbol{x}$. This representation is optimized to act as an information bottleneck, retaining only the information from $\boldsymbol{x}$ that is necessary for the second component: a generative model (decoder) that predicts the guiding variable $\boldsymbol{y}$ from $\boldsymbol{z}$.

Crucially, the latent space is not unstructured. We model its distribution as a Gaussian Mixture Model (GMM). This imposes a distinct cluster structure, forcing the informative bottleneck $\boldsymbol{z}$ to be organized into a discrete mixture of components $\boldsymbol{c}$.

Formally, for an input $\mathbf{x} = (\boldsymbol{x}_1, ..., \boldsymbol{x}_n)$, with $\boldsymbol{x}_i \in \mathbb{R}^{d_x}$, and a corresponding guiding variable $\mathbf{y} = (\boldsymbol{y}_1,...,\boldsymbol{y}_n)$, with $\boldsymbol{y}_i \in \mathbb{R}^{d_y}$, our model learns a continuous latent variable $\mathbf{z}=(\boldsymbol{z}_1,...,\boldsymbol{z}_n)$, with $\boldsymbol{z}_i \in \mathbb{R}^J$, and infers a discrete cluster assignment $\mathbf{c}=(c_1,...,c_n)$ with $c_i \in \{1,...,K\}$, with $n$ the number of observations. We assume that $(\boldsymbol{x}_i,\boldsymbol{y}_i,\boldsymbol{z}_i,c_i)$ are independent and identically distributed. For simplicity, we will omit the index $i$ in the rest of this paper when there is no ambiguity. \\
Figure~\ref{fig:schema_overview} provides a graphical overview of this architecture. The following subsections detail the probabilistic formulation of the generative process, the inference model, and the final training objective.

\begin{figure}[ht]
    \begin{center}
     \includegraphics[width=0.65\textwidth]{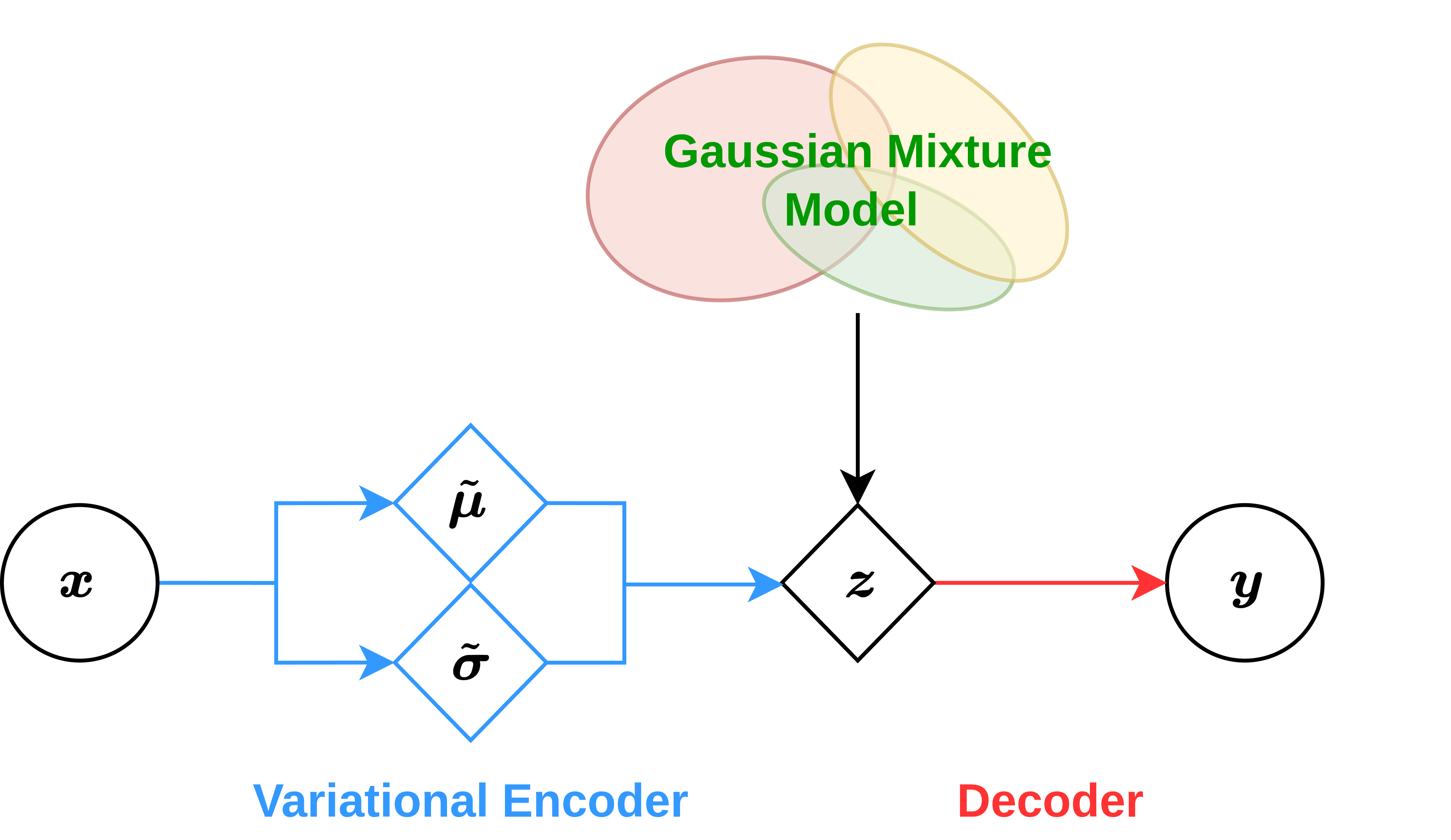}
     \caption{Overview of the GCVAE model. \textit{Diamond-shaped units denote latent variables, while round ones denote observations.}} \label{fig:schema_overview}
    \end{center}
\end{figure}   

\subsection{Generative model}

The generative process of our model consists of the following steps:
\begin{enumerate}
    \item Choose a cluster $c$ using a Multinomial distribution: 
    \begin{align}
        p_{\boldsymbol{\pi}}(c) = Cat(c;\boldsymbol{\pi}).
    \end{align}
    \item Generate a latent vector $\boldsymbol{z}$ conditioned on the cluster $c$ using a spherical Gaussian distribution: 
    \begin{align}
        p_{\boldsymbol{\mu}_c, \boldsymbol{\sigma}_c}(\boldsymbol{z}|c) = \mathcal{N}(\boldsymbol{z}; \boldsymbol{\mu}_{c}, \boldsymbol{\sigma}_{c}^2 I ). \label{eq_gmm}
    \end{align}
    \item Generate the variable $\boldsymbol{y}$ from the latent vector $\boldsymbol{z}$: 
    \begin{align}
        p_{{\boldsymbol{\theta}}}(\boldsymbol{y}|\boldsymbol{z}) = \mathcal{N}(\boldsymbol{y}; f_{{\boldsymbol{\theta}}}(\boldsymbol{z}), \boldsymbol{I}). \label{eq_f_y}
    \end{align}
\end{enumerate}

Here, $\boldsymbol{\pi} = (\pi_1,...,\pi_K) \in [0,1]^K$, with $\pi_c$ the prior probability for cluster $c$, $\sum_{c=1}^K \pi_c = 1$, $\boldsymbol{\mu}_c = \{\mu_{c,j}\}_{j=1,...,J}$ and $\boldsymbol{\sigma}_c= \{\sigma_{c,j}\}_{j=1,...,J}$ are respectively the mean and the diagonal covariance of the multivariate normal distribution corresponding to cluster $\boldsymbol{c}$, $J$ is the dimension of the latent space, $\boldsymbol{I}$ is an identity matrix, $f_{{\boldsymbol{\theta}}}(\boldsymbol{z})$ 
 is a network with input $\boldsymbol{z}$ and parametrized by ${\boldsymbol{\theta}}$.
 
\begin{figure}[ht]
    \centering
    \begin{minipage}[b]{0.3\textwidth}
        \centering
        \includegraphics[width=0.8\textwidth]{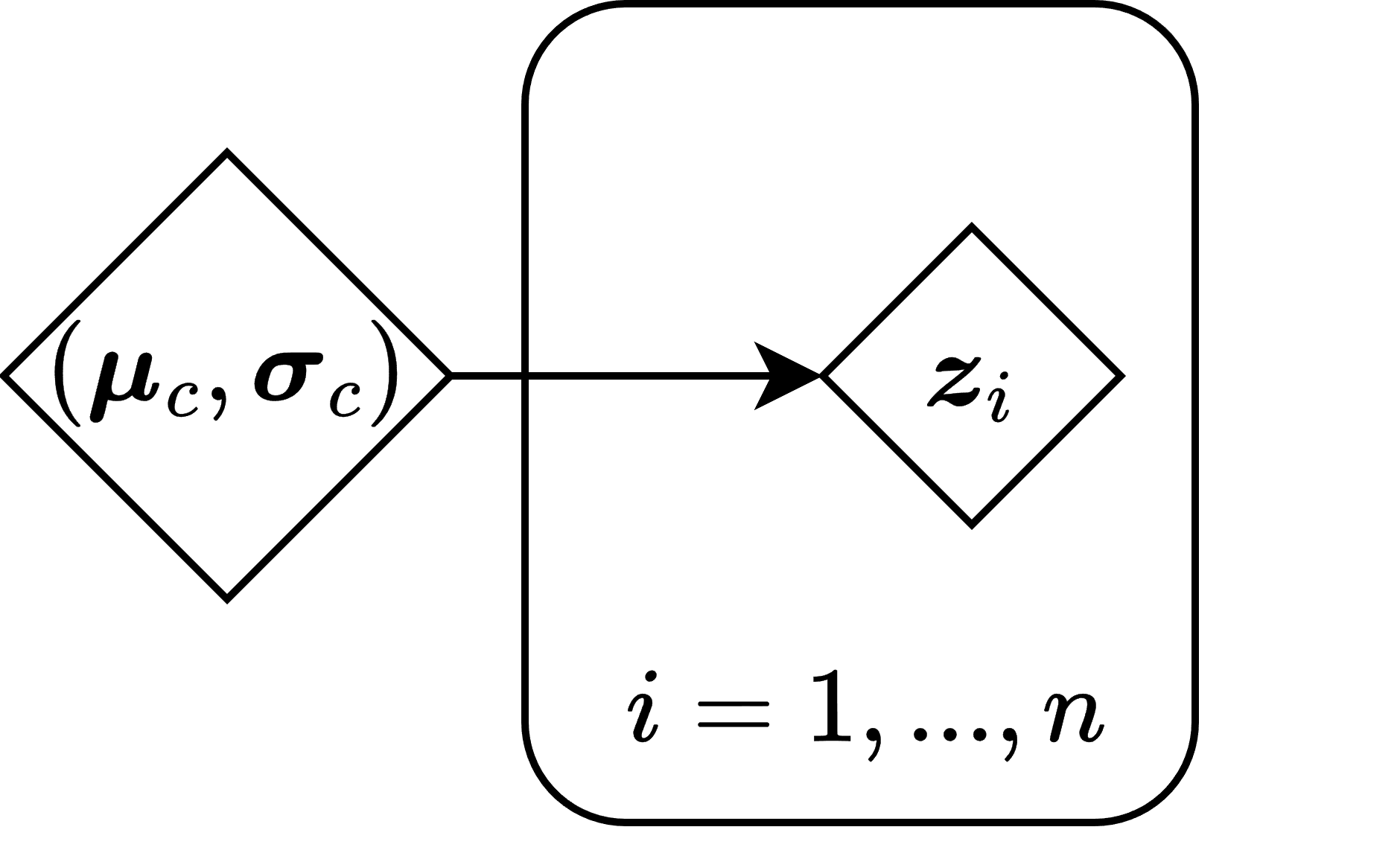}
    \end{minipage}
    \begin{minipage}[b]{0.3\textwidth}
        \centering
        \includegraphics[width=0.85\textwidth]{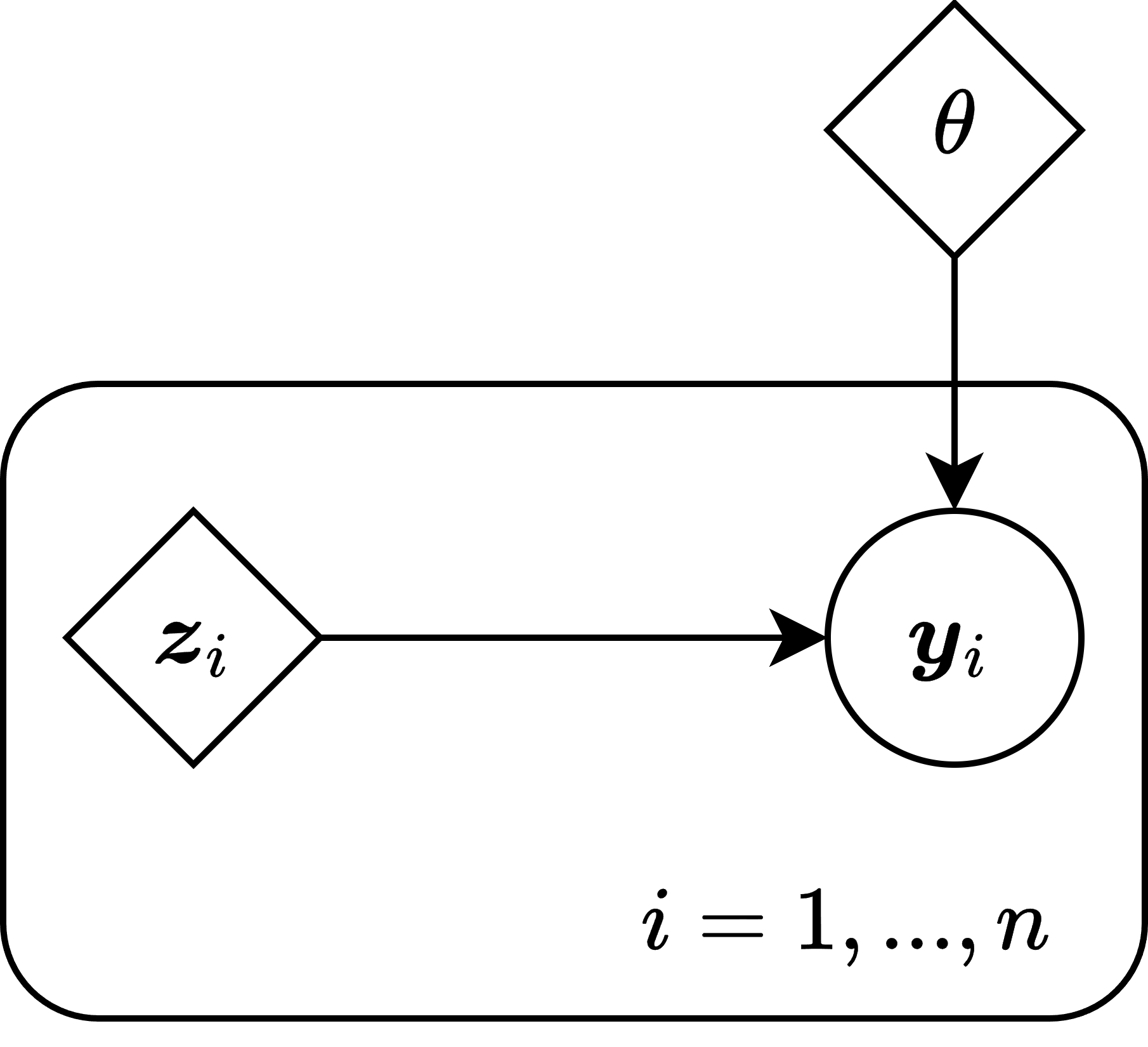}
    \end{minipage}
    \caption{A graphical representation of the generative process, with the GMM in Equation~\ref{eq_gmm} (left) and the decoder in Equation~\ref{eq_f_y} (right). }
    \label{fig:three_figures}
\end{figure}

Following the structure of established deep generative clustering models  \cite{VADE, GMVAE}, we assume: $p_{{\boldsymbol{\theta}}}(\boldsymbol{y}|\boldsymbol{z}, c) = p_{{\boldsymbol{\theta}}}(\boldsymbol{y}|\boldsymbol{z})$.
According to the generative process above, illustrated in Figure~\ref{fig:three_figures}, the joint probability of the model is given by:
\begin{align}
    p_{\boldsymbol{\Theta}}(\mathbf{z},\mathbf{c},\mathbf{y}) = p_{\boldsymbol{\Theta}}(\mathbf{y}|\mathbf{z}) p_{\boldsymbol{\Theta}}(\mathbf{z}|\mathbf{c}) p_{\boldsymbol{\Theta}}(\mathbf{c})  \label{eq:generative}
\end{align}
with $\boldsymbol{\Theta} = \{ \boldsymbol{\boldsymbol{\mu}}_1, ..., \boldsymbol{\boldsymbol{\mu}}_K, \boldsymbol{\sigma}_1, ..., \boldsymbol{\sigma}_K, \pi_1, ..., \pi_K, {\boldsymbol{\theta}}\}$.

\subsection{Inference model}\label{subsec:inference}

To estimate the generative parameters $\boldsymbol{\Theta}$, a standard VAE procedure would typically infers latent variables from the same data it intends to reconstruct, seeking the posterior $p_{\Theta}(\mathbf{z}, \mathbf{c} | \mathbf{y})$. In contrast, the guided clustering paradigm is built on an informational asymmetry: the encoder must learn to map the input features $\mathbf{x}$ to a latent space that mimics the structure of the ideal posterior conditioned on $\mathbf{y}$, as illustrated in Figure~\ref{fig:schema_overview}.

We therefore define the inference task as finding the variational approximation $q_{{\boldsymbol{\phi}}}(\mathbf{z}, \mathbf{c} | \mathbf{x})$ \cite{review_VI} that best captures this ideal structure. Ideally, we would minimize the Kullback-Leibler divergence between our variational approximation and this target posterior, which gives us the central hypothesis of the guided clustering:
\begin{align}
    \min_{{\boldsymbol{\phi}}} \text{KL}\left[ q_{{\boldsymbol{\phi}}}(\mathbf{z}, \mathbf{c} | \mathbf{x}) \,||\, p_{\boldsymbol{\Theta}}(\mathbf{z}, \mathbf{c} | \mathbf{y}) \right].\label{eq:hyp_GC}
\end{align}


For the implementation, we illustrate the encoder structure in Figure~\ref{fig:schema_encoder}. We assume a classical mean-field approximation for the variational distribution:
\begin{align}
    q_{\boldsymbol{\phi}}(\mathbf{z},\mathbf{c}|\mathbf{x}) = q_{\boldsymbol{\phi}}(\mathbf{z}|\mathbf{x}) q(\mathbf{c}|\mathbf{x}) \label{eq:inference}
\end{align}
with
\begin{gather}
    q_{\boldsymbol{\phi}}(\boldsymbol{z}_i|\boldsymbol{x}_i) = \mathcal{N}(\boldsymbol{z}_i;\tilde{\boldsymbol{\boldsymbol{\mu}}}_i, \tilde{\boldsymbol{\sigma}}_i^2 I) , \\
    [\tilde{\boldsymbol{\boldsymbol{\mu}}}_i, \log \tilde{\boldsymbol{\sigma}}_i] = g_{{\boldsymbol{\phi}}}(\boldsymbol{x}_i) \label{eq_g}   
\end{gather}
where $\tilde{\boldsymbol{\boldsymbol{\mu}}}_i \in \mathbb{R}^J$, $\tilde{\boldsymbol{\sigma}}_i \in \mathbb{R}^{J+}$ for $i \in \{1,..., n\}$. 
Here, $g_{{\boldsymbol{\phi}}}$ is a neural network parameterized by ${\boldsymbol{\phi}}$, and the detail of the estimation of $q(\boldsymbol{c}|\boldsymbol{x})$ is described in Section \ref{subsec:objective}.

\begin{figure}[ht]
\begin{center}
 \includegraphics[width=0.35\textwidth]{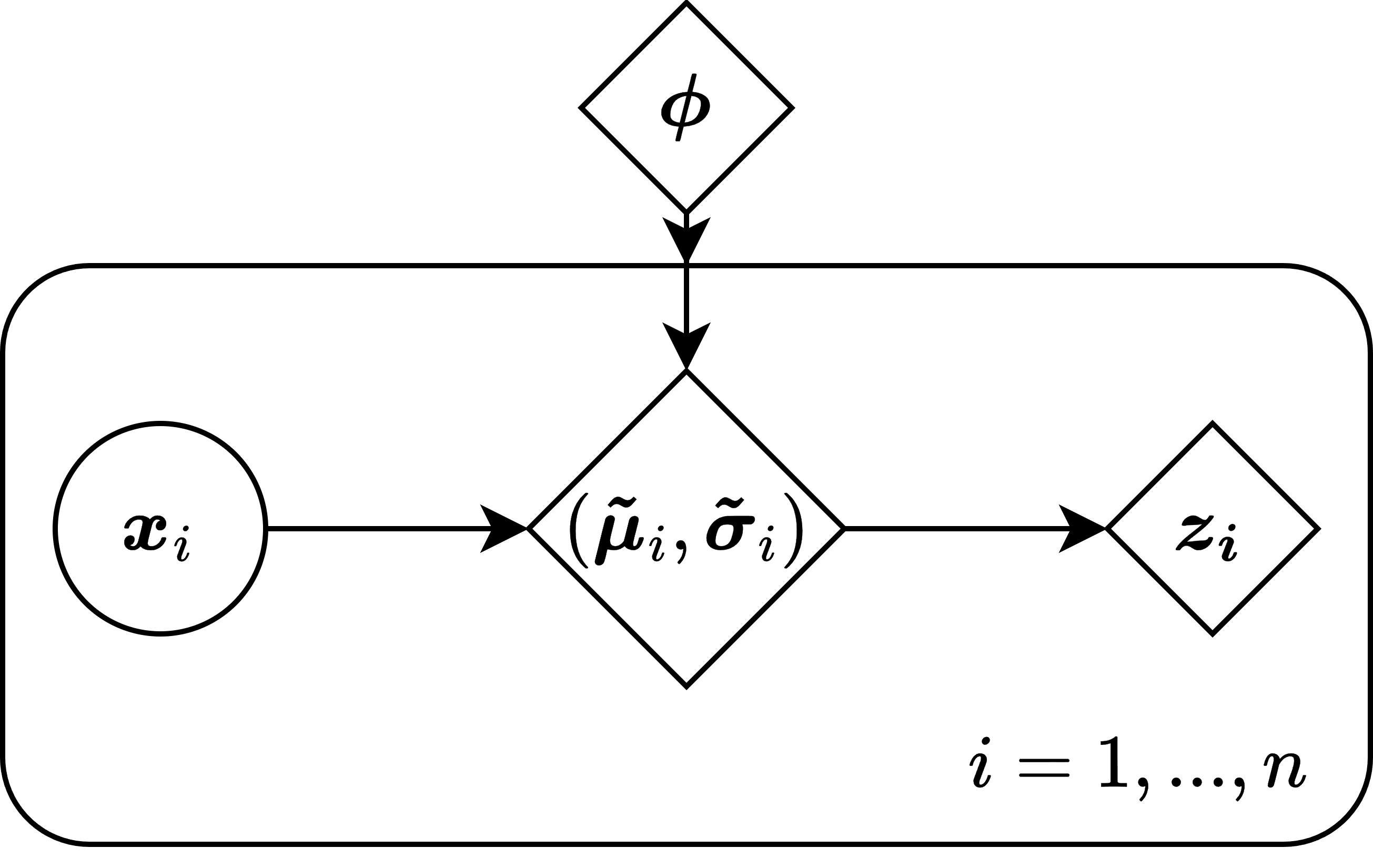}
 \caption{A graphical representation of the inference model.} \label{fig:schema_encoder}
\end{center}
\end{figure}

\subsection{Estimation loss}\label{subsec:objective}

Minimizing the divergence in Equation \ref{eq:hyp_GC} is equivalent to maximizing the Evidence Lower Bound (ELBO):
\begin{align*}
    &\text{KL}\left[ q_{{\boldsymbol{\phi}}}(\mathbf{z}, \mathbf{c} | \mathbf{x}) \,||\, p_{\boldsymbol{\Theta}}(\mathbf{z}, \mathbf{c} | \mathbf{y}) \right] 
    = \text{KL}\left[ q_{{\boldsymbol{\phi}}}(\mathbf{z}, \mathbf{c} | \mathbf{x}) \,||\, p_{\boldsymbol{\Theta}}(\mathbf{z}, \mathbf{c}) \right] - \mathbb{E}_{q_{\boldsymbol{\phi}}(\mathbf{z},\mathbf{c}|\mathbf{x})}[\log p_{\boldsymbol{\Theta}}(\mathbf{y}|\mathbf{z}, \mathbf{c})] + \log p_{\boldsymbol{\Theta}}(\mathbf{y}) \\
    & \Rightarrow \log p_{\boldsymbol{\Theta}}(\mathbf{y}) \geq \text{ELBO}_{\boldsymbol{\Theta}, {\boldsymbol{\phi}}}(\mathbf{x},\mathbf{y},\mathbf{z},\mathbf{c})  = \mathbb{E}_{q_{\boldsymbol{\phi}}(\mathbf{z},\mathbf{c}|\mathbf{x})}[\log p_{\boldsymbol{\Theta}}(\mathbf{y}|\mathbf{z}, \mathbf{c})]  - \text{KL}\left[ q_{{\boldsymbol{\phi}}}(\mathbf{z}, \mathbf{c} | \mathbf{x}) \,||\, p_{\boldsymbol{\Theta}}(\mathbf{z}, \mathbf{c}) \right].
\end{align*}

\paragraph{Remark} Minimizing the divergence \ref{eq:hyp_GC} thus compels the inference model to satisfy two competing goals: it must organize the data $\mathbf{x}$ into the structural constraints of the mixture prior $p_{\boldsymbol{\Theta}}(\mathbf{z}, \mathbf{c})$, while simultaneously retaining sufficient information to reconstruct the guiding variable $\mathbf{y}$.

While this standard formulation provides a rigorous lower bound on the log-likelihood, we want a more flexible control over the information flow. Specifically, we need to regulate how much information from $\mathbf{x}$ is compressed into the latent structure $(\mathbf{z}, \mathbf{c})$ versus how much is used to predict $\mathbf{y}$. 

To control this trade-off, we frame the learning problem as a constrained optimization task, following the $\beta$-VAE framework \cite{betavae}. We condition the inference model on $\mathbf{x}$ while targeting the reconstruction of $\mathbf{y}$. The goal is to maximize the reconstruction quality of $\mathbf{y}$ while constraining the information capacity of the latent representation learned from $\mathbf{x}$, using Kullback–Leibler divergence (KL-divergence).  With $\epsilon_\beta$ specifying the strength of the applied constraint, this can be formally stated as: 
\begin{align*}
\max_{\boldsymbol{\Theta}, {\boldsymbol{\phi}}} \ & \mathbb{E}_{q_{\boldsymbol{\phi}}(\mathbf{z},\mathbf{c}|\mathbf{x})}[\log p_{\boldsymbol{\Theta}}(\mathbf{y}|\mathbf{z},\mathbf{c})] \text{ subject to } \text{KL}[q_{\boldsymbol{\phi}}(\mathbf{z},\mathbf{c}|\mathbf{x})||p_{\boldsymbol{\Theta}}(\mathbf{z},\mathbf{c})] < \epsilon_\beta.
\end{align*}

Rewriting it as a Lagrangian under the KKT conditions \cite{kkt}, we obtain: 
\begin{align*}
    \mathcal{F}_{\boldsymbol{\Theta}, {\boldsymbol{\phi}}, \beta}(\mathbf{x},\mathbf{y},\mathbf{z},\mathbf{c})  
    &=  \mathbb{E}_{q_{\boldsymbol{\phi}}(\mathbf{z},\mathbf{c}|\mathbf{x})}[\log p_{\boldsymbol{\Theta}}(\mathbf{y}|\mathbf{z},\mathbf{c})]   
     \ - \beta \ (\text{KL}[q_{\boldsymbol{\phi}}(\mathbf{z},\mathbf{c}|\mathbf{x})||p_{\boldsymbol{\Theta}}(\mathbf{z},\mathbf{c})] - \epsilon_\beta) 
\end{align*}
where the KKT multiplier $\beta$ is the regularisation coefficient.

Still following the $\beta$-VAE article principle and formulation, we can derive the final ELBO:
\begin{align*}
    \mathcal{F}_{\boldsymbol{\Theta}, {\boldsymbol{\phi}}, \beta}(\mathbf{x},\mathbf{y},\mathbf{z},\mathbf{c}&) 
    \geq \mathcal{L}_{\boldsymbol{\Theta}, {\boldsymbol{\phi}},\beta}(\mathbf{x},\mathbf{y}, \mathbf{z},\mathbf{c}) 
     = \mathbb{E}_{q_{\boldsymbol{\phi}}(\mathbf{z},\mathbf{c}|\mathbf{x})}[\log p_{\boldsymbol{\theta}}(\mathbf{y}|\mathbf{z},\mathbf{c})] - \beta \ \text{KL}[q_{\boldsymbol{\phi}}(\mathbf{z},\mathbf{c}|\mathbf{x})||p_{\boldsymbol{\Theta}}(\mathbf{z},\mathbf{c})] .
\end{align*}

By adjusting $\beta$, we can prioritize the discovery of a structured latent space that is not just a passthrough for the data, but an effective information bottleneck. Importantly, the introduction of this weighting factor does not impact the theoretical convergence properties discussed in Section~\ref{sec:consistency}.

The ELBO can then be developed as below (cf. Appendix \ref{elbo}):
\begin{align*}
\mathcal{L}_{\boldsymbol{\Theta}, {\boldsymbol{\phi}},\beta}(\mathbf{x},\mathbf{y}, \mathbf{z},\mathbf{c})
\simeq & -  \frac{1}{L} \sum_{l=1}^L \sum_{i=1}^n||\boldsymbol{y}_i - f_{{\boldsymbol{\theta}}}(\boldsymbol{z}_i^{(l)})||_2^2  \\
& + \beta \sum_{i=1}^n \left( - \frac{1}{2} \sum_{c=1}^K q(c|\boldsymbol{x}_i) 
 \sum_{j=1}^J \left(\log \sigma_{c,j}^2 
+ \frac{\tilde{\sigma}_{i,j}^2}{\sigma_{c,j}^2}  \right. \right.
 \left. + \frac{\left(\tilde{\mu}_{i,j}-\mu_{c,j}\right)^2}{\sigma_{c,j}^2} \right) \\
 & \ \ +\sum_{c=1}^K q(c|\boldsymbol{x}_i) \log \pi_{c} 
 \left. + \frac{1}{2} \sum_{j=1}^J(1+\log \tilde{\sigma}_{i,j}^2)  - \sum_{c=1}^K q(c|\boldsymbol{x}_i) \log q(c|\boldsymbol{x}_i) \right)
\end{align*}

with $\boldsymbol{z}_i^{(l)} \sim \mathcal{N}(\tilde{\boldsymbol{\mu}}_i, \tilde{\boldsymbol{\sigma}}_i^2I)$, and $[\tilde{\boldsymbol{\mu}}_i, \log \tilde{\boldsymbol{\sigma}}_i^2] = g_{{\boldsymbol{\phi}}}(\boldsymbol{x}_i)$ with $\tilde{\boldsymbol{\mu}}_i = \{\tilde{\mu}_{i,j}\}_{j=1,...,J}$ and $\tilde{\boldsymbol{\sigma}}_i = \{\tilde{\sigma}_{i,j}\}_{j=1,...,J}$. Recall that $J$ is the dimension of $\boldsymbol{z}_i$.

We finally approximate $q(c|\boldsymbol{x})$ using the SGVB estimator (cf. Appendix \ref{approx}), where $L$ is the number of Monte Carlo samples: 
\begin{align*}
    q(c|\boldsymbol{x}) 
    &= \mathbb{E}_{q_{\boldsymbol{\phi}}(\boldsymbol{z}|\boldsymbol{x})} [p_{\boldsymbol{\Theta}}(c|\boldsymbol{z})] 
    \simeq \frac{1}{L} \sum_{l=1}^L \frac{ p_{\boldsymbol{\Theta}}(\boldsymbol{z}^{(l)}|c)p_{\boldsymbol{\Theta}}(c)}{\sum_{c'=1}^K p_{\boldsymbol{\Theta}}(\boldsymbol{z}^{(l)}|c')p_{\boldsymbol{\Theta}}(c')}.
\end{align*}

To encourage sharp clusters assignment while maintaining the differentiability of the model, we employ the Gumbel-Softmax reparameterization trick. This provides a differentiable approximation to sampling from the categorical distribution $q(c|\boldsymbol{x})$, thereby ensuring that gradients can flow through the entire model. 

\subsection{Consistency}\label{sec:consistency}

To establish the theoretical validity of our estimator, we analyze the convergence properties of the GCVAE. We utilize the framework of Generalized Variational Inference (GVI) \cite{knoblauch20193argsgeneralized}, which generalizes standard Bayesian inference to posterior beliefs derived from arbitrary loss functions. Our analysis relies on the frequentist consistency results established in Theorem 2 of 
\cite{knoblauch2019frequentist}. By applying this framework to the specific structure of the GCVAE, we can state the following: 

\begin{proposition}[GCVAE consistency]
    Let $\boldsymbol{\psi} = \{\boldsymbol{\Theta}, {\boldsymbol{\phi}}\} \in \boldsymbol{\Psi}$ denote the complete set of learnable global parameters in the GCVAE model. If the data are i.i.d. and $\boldsymbol{\Psi}$ is compact, the estimator $\hat{\boldsymbol{\psi}}_n$ converges to a point mass at the population-optimal parameter $\boldsymbol{\psi}^*$ as number of observations $n \to \infty$.
\end{proposition}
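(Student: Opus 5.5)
The plan is to cast the GCVAE objective as an empirical risk over the global parameters $\boldsymbol{\psi}$ and then invoke Theorem 2 of \cite{knoblauch2019frequentist}.

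First, write the negative objective as a per-observation loss. Because the $(\boldsymbol{x}_i,\boldsymbol{y}_i,\boldsymbol{z}_i,c_i)$ are i.i.d. and the mean-field approximation (\ref{eq:inference}) factorizes across $i$, the ELBO decomposes as $-\mathcal{L}_{\boldsymbol{\Theta},\boldsymbol{\phi},\beta} = \sum_{i=1}^n \ell(\boldsymbol{\psi};\boldsymbol{x}_i,\boldsymbol{y}_i)$. Here
\begin{align*}
\ell(\boldsymbol{\psi};\boldsymbol{x},\boldsymbol{y}) = -\mathbb{E}_{q_{\boldsymbol{\phi}}(\boldsymbol{z},c|\boldsymbol{x})}[\log p_{\boldsymbol{\theta}}(\boldsymbol{y}|\boldsymbol{z})] + \beta\,\text{KL}[q_{\boldsymbol{\phi}}(\boldsymbol{z},c|\boldsymbol{x})\,||\,p_{\boldsymbol{\Theta}}(\boldsymbol{z},c)],
\end{align*}
with explicit form given by the closed-form development above (cf. Appendix \ref{elbo}). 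In GVI language this is a loss on the data, and the local variational quantities $(\tilde{\boldsymbol{\mu}}_i,\tilde{\boldsymbol{\sigma}}_i,q(c|\boldsymbol{x}_i))$ are deterministic functions of $\boldsymbol{\psi}$ and $\boldsymbol{x}_i$ through $g_{\boldsymbol{\phi}}$. Only the finite-dimensional $\boldsymbol{\psi}$ is learned, so the estimator is $\hat{\boldsymbol{\psi}}_n$, or a GVI posterior $q_n(\boldsymbol{\psi})\propto \exp\{-\sum_i \ell(\boldsymbol{\psi};\boldsymbol{x}_i,\boldsymbol{y}_i)\}\pi(\boldsymbol{\psi})$ with a divergence regularizer. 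The population target is $\boldsymbol{\psi}^*=\arg\min_{\boldsymbol{\psi}\in\boldsymbol{\Psi}} \mathbb{E}[\ell(\boldsymbol{\psi};\boldsymbol{x},\boldsymbol{y})]$. The constant $\beta$ only rescales one term, which is why it does not affect the argument.

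Second, verify the hypotheses of Theorem 2. The points to check are measurability and continuity of $\ell$ in $\boldsymbol{\psi}$, an integrable envelope, and a uniform law of large numbers $\sup_{\boldsymbol{\psi}\in\boldsymbol{\Psi}}|\frac1n\sum_i\ell(\boldsymbol{\psi};\cdot)-\mathbb{E}\ell(\boldsymbol{\psi};\cdot)|\to0$ a.s. The networks $f_{\boldsymbol{\theta}}$ and $g_{\boldsymbol{\phi}}$ use continuous activations, so $\ell$ is continuous in $\boldsymbol{\psi}$. Compactness of $\boldsymbol{\Psi}$ bounds the weights and keeps $\sigma_{c,j}$ away from $0$ and $\pi_c$ away from $0$. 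This makes the $\log\sigma^2$, $1/\sigma^2$ and $\log\pi_c$ terms bounded, and the entropy term is bounded by $\log K$. The remaining terms, $\|\boldsymbol{y}-f_{\boldsymbol{\theta}}(\boldsymbol{z})\|^2$ and $(\tilde\mu-\mu_c)^2$, are dominated by a polynomial in $\|\boldsymbol{x}\|,\|\boldsymbol{y}\|$, because networks with Lipschitz activations and bounded weights grow at most linearly. Given second moments of $(\boldsymbol{x},\boldsymbol{y})$, the envelope is then integrable. Continuity, compactness and the integrable envelope together give the uniform LLN by the standard Glivenko–Cantelli argument for parametric classes.

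Third, conclude. Theorem 2 then yields concentration of $q_n$, equivalently convergence of $\hat{\boldsymbol{\psi}}_n$, to a point mass at the minimizer of the population risk, provided that minimizer is well separated.

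I expect this identifiability step to be the main obstacle. Neural-network parameters and mixture labels are never identifiable, because of label switching among the $K$ components and weight symmetries. So I would state the result modulo these symmetries: either take $\boldsymbol{\psi}^*$ as an equivalence class under label permutations and network symmetries, or simply assume a unique, well-separated minimizer of $\mathbb{E}\ell$ in $\boldsymbol{\Psi}$, as Theorem 2 requires. The Monte Carlo and Gumbel-Softmax approximations would be treated as exact expectations, since the analysis concerns the objective $\mathcal{L}$ itself.
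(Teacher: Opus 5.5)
\textbf{Gap: the integrable envelope.} The encoder outputs $\log\tilde{\sigma}_j^2$. With Lipschitz layers and bounded weights you therefore only get $\tilde{\sigma}_j^2=\exp(\log\tilde{\sigma}_j^2)\le e^{a+b\|\boldsymbol{x}\|}$, which is not a polynomial bound. This quantity enters the loss twice:
\begin{itemize}
\item through $\tilde{\sigma}_j^2/\sigma_{c,j}^2$ in the KL term (you counted $1/\sigma_{c,j}^2$ as bounded, but it multiplies $\tilde{\sigma}_j^2$);
\item through the reconstruction term, since $\mathbb{E}_{q_{\boldsymbol{\phi}}(\boldsymbol{z}|\boldsymbol{x})}\|\boldsymbol{y}-f_{\boldsymbol{\theta}}(\boldsymbol{z})\|^2$ can grow like $\|\boldsymbol{y}\|^2+\|\tilde{\boldsymbol{\mu}}\|^2+\sum_j\tilde{\sigma}_j^2$.
\end{itemize}
So finite second moments of $(\boldsymbol{x},\boldsymbol{y})$ do not give an integrable envelope. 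Take $\boldsymbol{x}$ with finite variance but an infinite moment generating function, and a single $\boldsymbol{\psi}$ whose log-variance head depends linearly on $\boldsymbol{x}$: then $\mathbb{E}\,\ell(\boldsymbol{\psi};\boldsymbol{x},\boldsymbol{y})$ is already $+\infty$, and the Glivenko--Cantelli step has nothing to stand on. The paper's check of its Assumption 1.6 makes the same ``at most quadratic growth'' claim, so this flaw is shared rather than a departure from the paper. The repair is cheap. Any one of the following suffices, after which the rest of your argument goes through:
\begin{itemize}
\item assume $\mathbb{E}\,e^{b\|\boldsymbol{x}\|}<\infty$, with $b$ a uniform bound over $\boldsymbol{\Psi}$ on the Lipschitz constant of $\boldsymbol{x}\mapsto\log\tilde{\boldsymbol{\sigma}}^2$;
\item assume bounded inputs, which holds in both experiments (min--max normalized features, pixel intensities);
\item put a bounded activation on the log-variance output.
\end{itemize}

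\textbf{Comparison with the paper.} Otherwise your outline is the paper's. You reduce to Knoblauch's Theorem 2, check continuity, pointwise finiteness, an LLN, finite expected loss and compactness, and assume $\boldsymbol{\psi}^*$ is unique modulo label and weight symmetries, exactly as the paper does in its Assumption 1.5. The substantive difference is that you integrate $(\boldsymbol{z},c)$ out of the per-observation loss, using that the amortized encoder makes $(\tilde{\boldsymbol{\mu}},\tilde{\boldsymbol{\sigma}},q(c|\boldsymbol{x}))$ functions of $(\boldsymbol{\psi},\boldsymbol{x})$. The paper keeps the latents as arguments of the loss. That is why it needs Gumbel-Softmax for continuity, and why it must argue that $\mathcal{Q}^z$ approaches Dirac masses as temperature and variance go to $0$. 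That argument sits awkwardly with a compact $\boldsymbol{\Psi}$ and an encoder-determined $q_{\boldsymbol{\phi}}(\boldsymbol{z}|\boldsymbol{x})$; your formulation needs neither.

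You also upgrade the pointwise LLN to a uniform one. A unique minimizer is automatically well separated, since the population risk is continuous on a compact set. Combined with compactness, your argument is then the classical argmin-consistency theorem, so for $\hat{\boldsymbol{\psi}}_n$ itself the GVI machinery is unnecessary. Your equivalence-class option gives convergence to the set of minimizers, which is more honest than assuming uniqueness up to permutation.

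If you instead want the statement for the GVI posterior $q_n$, your checklist is incomplete. Theorem 2 also requires conditions on $\mathcal{Q}^{\boldsymbol{\psi}}$, the divergence and the prior (the paper's Assumptions 2--5 and 7 in Appendix~\ref{appendix:cv}). Also, concentration of $q_n$ is not ``equivalent'' to convergence of $\hat{\boldsymbol{\psi}}_n$.
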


The proof is available in \cite{knoblauch2019frequentist}. The direct application of this theorem is non-trivial due to the non-standard nature of our mixed variational family, which involves both continuous and discrete latent variables. As the rigorous verification of these conditions is rarely detailed in the literature, we provide a full verification in Appendix~\ref{appendix:cv}.


\section{Experiments}\label{sec:experiments}
In this section, we evaluate the performance of the GCVAE on two datasets. The model has been implemented in Python using Pytorch \cite{pytorch}, and the code is available at https://github.com/vcourrier/gcvae.

\subsection{Practical implementation guidelines}

To effectively apply the GCVAE in practice, specific attention must be paid to the architectural choices and the optimization of the ELBO. While the generative framework is agnostic to the specific family of neural networks used, the choice of hyperparameters impacts the model's ability to uncover meaningful cluster structures.

\paragraph{Initializating GMM}
In this work, pre-training is used to initialize GMM parameters $(\pi_c, \boldsymbol{\mu}_c, \boldsymbol{\sigma}_c)$, a common practice in deep clustering \cite{VADE, dec}. We pretrain the model without the clusters in the latent space, leading to the following ELBO:
\begin{align*}
    \mathcal{L}^{\text{pretrain}}_{\boldsymbol{\Theta}, {\boldsymbol{\phi}}, \beta}(\mathbf{x},\mathbf{y}, \mathbf{z},\mathbf{c}) 
    = \ & \ \mathbb{E}_{q_{\boldsymbol{\phi}}(\mathbf{z}|\mathbf{x})}[\log p_{\boldsymbol{\Theta}}(\mathbf{y}|\mathbf{z})] 
     - \beta \ \text{KL}[q_{\boldsymbol{\phi}}(\mathbf{z}|\mathbf{x})||\mathcal{N}(\mathbf{z};0_J,\boldsymbol{I})]
\end{align*}
with $0_J$ a vector null of dimension $J$. \\
After pretraining the encoder and decoder for a few epochs, we can fit a GMM in the latent space to initialize its parameters.

\paragraph{Network architectures and the information bottleneck} The encoder and decoder functions ($g_{\boldsymbol{\phi}}$ and $f_{\boldsymbol{\theta}}$) can be parameterized using various architectures suited to the data modality, such as Multilayer Perceptrons (MLPs), Convolutional Neural Networks (CNNs), Recurrent Neural Networks (RNNs)... However, from a statistical perspective, we caution against the use of excessively complex networks. If the encoder capacity is too high, the network may \enquote{absorb} the data's structure into its parameters, rendering the latent space $\mathbf{z}$ a simple passthrough rather than a structured representation. To ensure the model learns robust clusters, we design the networks to act as an effective information bottleneck. By intentionally constraining the network capacity (e.g., reducing the number of layers or units), we force the model to compress the essential features of $\mathbf{x}$ into the latent clusters $\mathbf{c}$ and variable $\mathbf{z}$ to maximize the predictive accuracy of $\mathbf{y}$. This parsimonious approach ensures that the clustering structure captures the signal rather than the network weights.

\paragraph{The regularization parameter $\beta$} The coefficient $\beta$ controls the trade-off between the reconstruction fidelity of the guiding variable and the adherence of the latent space to the GMM prior. In our experiments, we observe that a small weighting ($\beta < 1$) typically yields a more tractable optimization problem \cite{burgess2018understanding}. This prevents the complex KL-divergence term associated with the GMM from dominating the loss. Multiple techniques can be used to find a suitable $\beta$, via standard grid search or KL annealing \cite{fu2019cyclical}, where $\beta$ is gradually increased during training.

\paragraph{Optimization and hyperparameters}
The model parameters are estimated using the Adam optimizer \cite{adam}, a standard for stochastic gradient-based optimization. Beyond $\beta$, the key hyperparameters requiring selection include the learning rate, the dimensions of the neural networks, and the number of latent clusters $K$. The choice of $K$ may be driven by prior domain knowledge (as seen in our clinical application) or determined via model selection criteria compatible with the VAE framework.

\subsection{MNIST-SVHN dataset}

\begin{figure}[ht]
\begin{center}
 \includegraphics[width=0.4\textwidth]{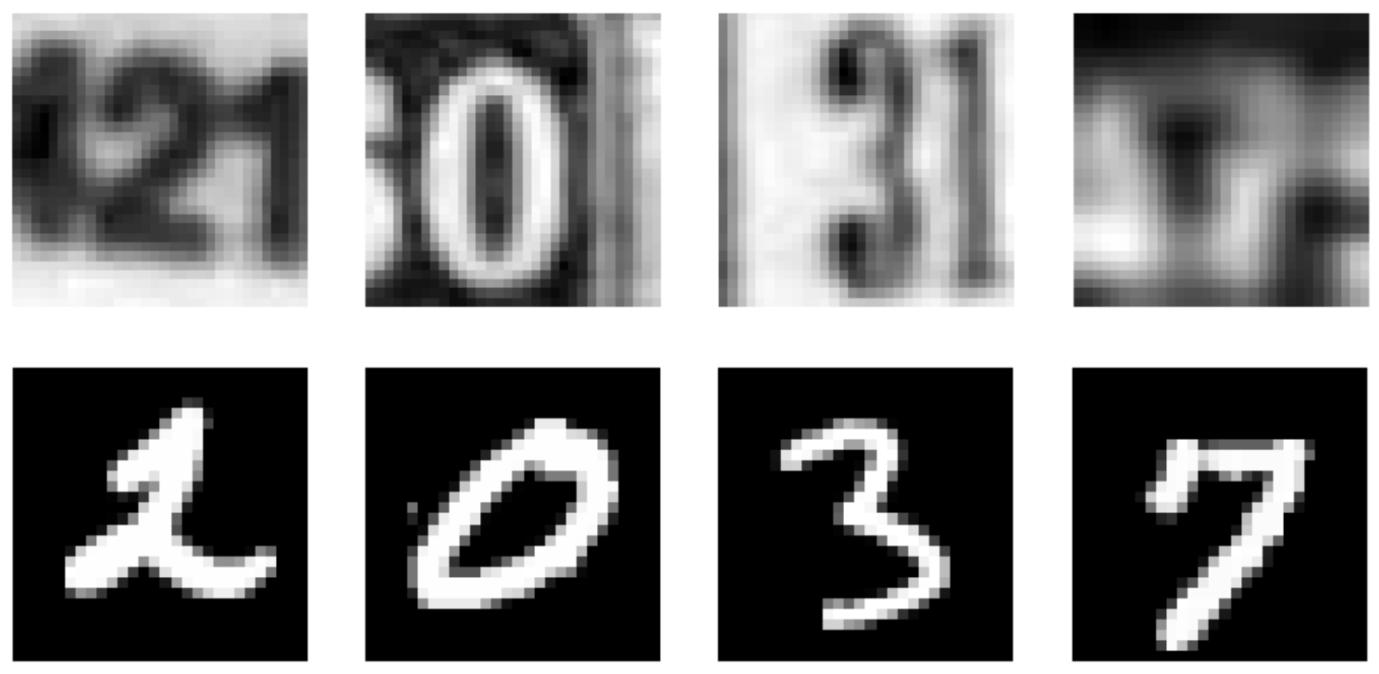}
 \caption{Example of SVHN (top) and MNIST (bottom) data.} \label{fig:mnist_svhn}
\end{center}
\end{figure}

\paragraph{Dataset} We evaluate our proposed approach on a dataset composed of paired MNIST and black-and-white SVHN images, with the SVHN image as our input $\mathbf{x}$ and the MNIST image as our guiding variable $\mathbf{y}$, where each pair represents the same digit class,  following the methodology introduced in the MMVAE model \cite{mmvae}. Each instance of a digit class (in either dataset) is randomly paired with $30$ instances of the same digit class from the other dataset. We use the standard training partitions of MNIST and SVHN (see details in the respective datasets description). As illustrated in Figure~\ref{fig:mnist_svhn}, the dataset presents a variety of styles, making the task of capturing the digit classes challenging. 

\paragraph{Implementation details} For this experiment, we use CNNs for the encoder of SVHN, and a MLP for the decoder of MNIST. The latent space dimension is set to 20. For learning, we use the Adam optimizer \cite{adam}. We set $\beta = 0.1$. During the pre-training, we use a learning rate of $0.0001$ and run it for 5 epochs. During the training, we use a learning rate of $0.0005$ for the parameters of the encoder and decoder, and $0.0005$ too for the parameters of the GMM, and set the number of clusters to $10$, run it for 50 epochs. 

\paragraph{Impact of the guiding variable} To assess the influence of the predictive variable on our model's clustering performance, we conduct experiments on a multitude of clustering methods. 

Consistent with prior research in image clustering, we evaluate performance using global classification accuracy (ACC), where a cluster-to-class mapping is determined via the Hungarian algorithm \cite{Hungarian}. Without a guiding variable $\boldsymbol{y}$, with the same model but reconstructing $\boldsymbol{x}$, the model achieves an ACC of $11.6$\% (average on 10 runs). In contrast, using the MNIST data as the guiding variable significantly improves the ACC to $62.1$\% (average on 10 runs). 

\begin{table}[ht]
\caption{ACC on standard clustering benchmarks.}
\label{tab:acc_comparaison}
\vskip 0.15in
\begin{center}
\begin{small}
\begin{sc}
\begin{tabular}{lcr}
\toprule
 Model & ACC \\
\midrule
\multicolumn{2}{l}{\emph{Clustering models with image-specific transformations}} \\
\hspace{1em} DTI K-means     \cite{dti}            & 44.5\% \\
\hspace{1em} SCAE            \cite{scae}           & 55.3\% \\
\hspace{1em} DTI GMM         \cite{dti}            & 57.4\% \\
\hspace{1em} ACOL-GAR        \cite{acol_gar}       & 76.8\% \\
\hdashline 
\multicolumn{2}{l}{\emph{Clustering models with domain-agnostic designs}} \\
\hspace{1em} GMM             \cite{gmm}            & 11.6\% \\
\hspace{1em} DEC             \cite{dec}            & 11.9\% \\
\hspace{1em} K-means         \cite{kmeans}         & 12.2\% \\
\hspace{1em} DeepCluster-v2  \cite{deepclusterv2}  & 20.6\% \\
\hspace{1em} VaDE            \cite{VADE}           & 30.8\% \\
\hspace{1em} MFCVAE          \cite{mfcvae}         & 56.3\% \\
\hspace{1em} IMSAT           \cite{imsat}          & 57.3\% \\
\hspace{1em} \textbf{GCVAE} (our proposal)         & 62.1\% \\
\bottomrule
\end{tabular}
\end{sc}
\end{small}
\end{center}
\vskip -0.1in
\end{table}

\paragraph{Comparison on standard benchmarks} 

To quantify the value added by the guiding variable $\mathbf{y}$, we contrast the performance of GCVAE against established unsupervised clustering methods. We categorize these baselines into two groups based on their reliance on domain-specific knowledge (Table \ref{tab:acc_comparaison}).

In the first category, models rely on image-specific transformations or architectures, effectively acting as implicit guidance. DTI \cite{dti} incorporates spatial or morphological transformations, SCAE \cite{scae} segments images into part templates before reasoning about \enquote{object capsules}, and ACOL-GAR \cite{acol_gar} applies domain-specific transformations to generate pseudo parent classes and achieves high performance. Notably, ACOL-GAR achieves the highest performance on this benchmark (76.8\%), surpassing GCVAE (62.1\%). However, this superiority stems from \enquote{hard-coded} domain knowledge: the method relies on invariances specific to visual data to generate supervision. While highly effective for images, this reliance makes such methods less transferable to non-visual tasks (e.g., tabular or sensor data) where such domain-specific invariants are unknown or undefined.

In the second category, domain-agnostic methods seek broader applicability without depending on extensive image transformations. This includes classical approaches like GMM \cite{gmm} and K-means \cite{kmeans}, as well as deep methods like DEC \cite{dec}, VaDE \cite{VADE}, MFCVAE \cite{mfcvae}, IMSAT \cite{imsat}, \cite{deepcluster} and DeepCluster-v2 \cite{deepclusterv2}. As shown in Table \ref{tab:acc_comparaison}, these methods struggle to recover the digit classes solely from raw pixel statistics, with accuracies ranging from 11.6\% to 57.3\%. GCVAE achieves 62.1\% ACC, surpassing these domain-agnostic models by leveraging the guiding variable exclusively in the generative process.

This comparison highlights the specific niche of GCVAE: while it may not outperform specialized models equipped with extensive domain-specific augmentations (like ACOL-GAR), it significantly outperforms generic unsupervised methods by effectively leveraging the guiding variable $\boldsymbol{y}$. This offers a flexible solution that provides structure to complex data without requiring the manual design of domain-specific transformations.

\begin{figure}[ht]
\begin{center}
 \includegraphics[width=0.6\textwidth]{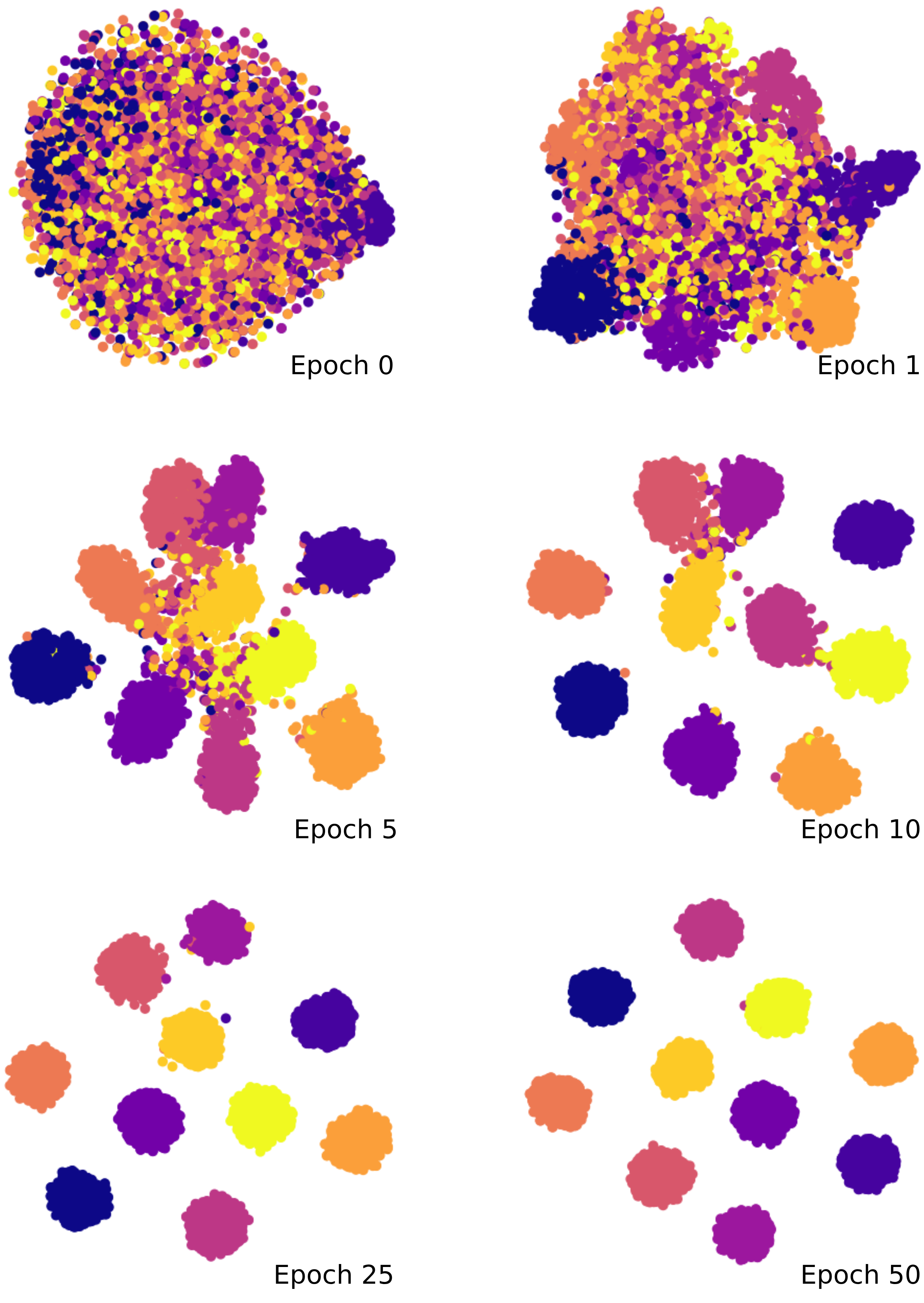}
 \caption{t-SNE visualisations at different epoch during the training of $5,000$ training images.} \label{fig:tsne}
\end{center}
\end{figure}

\paragraph{Visualization of the clusters during the training} Figure~\ref{fig:tsne} shows the t-SNE \cite{tsne} visualizations of our model’s latent space at different training epochs ($1$, $5$, $10$, $25$, $50$) for $5,000$ SVHN training examples. Each point in the plot corresponds to the latent vector of a single image, colored according to its ground-truth label. Initially, after the pretraining step, the latent-space representation is relatively unstructured. By epoch $5$, clusters begin to appear, indicating that our model is starting to learn features that help separate different digit classes. As training continues, these clusters become increasingly well-defined, and data points sharing the same label gather into tighter, more distinct regions. Notably, the boundaries between clusters also grow clearer, suggesting that the learned representations reflect class-specific properties more effectively over time. Overall, this progressive separation of clusters highlights how the guided training procedure refines the latent-space vectors over time to achieve better discrimination among different labels in an unsupervised setting.

\subsection{Sleep dataset}\label{subsec:sleep_dataset}

\paragraph{Dataset} To demonstrate the operational utility of our guided clustering paradigm, we apply GCVAE to a real-life dataset derived from Withings' proprietary data. Withings is a French company in the field of digital health, known for designing and manufacturing a wide range of connected health devices\footnote{https://www.withings.com/us/en/}. The experiment is designed to demonstrate how a guiding variable can help recover meaningful and coherent subgroups from complex data. While this dataset allows us to showcase our method's properties, the discovered subgroups are intended as a methodological proof-of-concept rather than a definitive clinical finding. Furthermore, the dataset has not undergone rigorous debiasing procedures.

Our analyses are based on sleep data collected from $50,000$ individuals, with each contributing one night of data. All personally identifiable information has been removed in compliance with GDPR guidelines. We use seven features for the input vector $\boldsymbol{x}$: \textit{sleep\_duration}, \textit{bmi}, \textit{age}, \textit{light\_sleep\_duration}, \textit{deep\_sleep\_duration}, \textit{nb\_sleep\_interruptions}, and \textit{avg\_night\_hr}. The guiding variable $\boldsymbol{y}$ explored is the \textit{apnea\_hypopnea\_index}, a standard clinical metric that quantifies sleep apnea severity by measuring the number of breathing interruptions per hour of sleep. A detailed overview of the dataset, including feature descriptions and Apnea-Hypopnea Index (AHI) categorization, is provided in Appendix~\ref{appendix:dataset}. We split the dataset into train, test, and validation sets of repartition $70$\%, $20$\% and $10$\% respectively.

\paragraph{Implementation details} The objective is to demonstrate that using the Apnea-Hypopnea Index (AHI) as a guiding variable $\boldsymbol{y}$ allows the model to discover more clinically coherent partitions of the user data $\boldsymbol{x}$ than an unguided approach. We set $K=3$ clusters to search for distinct user phenotypes.

For a principled comparison, we contrast our GCVAE (which learns a mapping $\boldsymbol{x} \mapsto y$) with a unguided baseline. This baseline uses the same architecture but is adapted for a different task. It receives the concatenated input $(\boldsymbol{x},\boldsymbol{y})$ and is trained to find a clustered representation that reconstructs both variables. This comparison allows us to differentiate between finding structure in the joint $(\boldsymbol{x},\boldsymbol{y})$ space versus finding structure within $\boldsymbol{x}$ that is relevant to $\boldsymbol{y}$.

For this experiment, we use MLPS in the encoder and the decoder. For learning, we use the Adam optimizer, as before. During the pre-training ($5$ epochs), we use a learning rate of $0.0005$ and a $\beta$ of $0.001$. During the training ($50$ epochs), we use a learning rate of $0.0001$ for the parameters of the encoder and decoder, and $0.00001$ for the parameters of the GMM, and a $\beta$ of $0.01$. %

\begin{figure}[ht]
\begin{center}
 \includegraphics[width=0.6\textwidth]{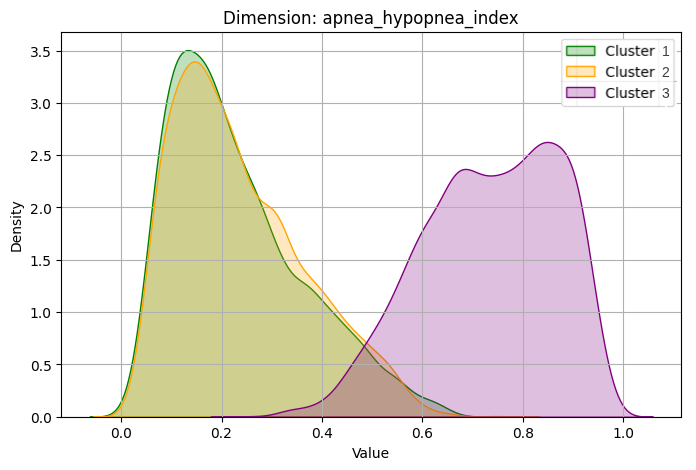}
 \caption{Distribution of the AHI within each cluster discovered by the unguided model. Note the significant overlap between the distributions for Cluster 1 and Cluster 2.} \label{fig:ahi_not_guided}
\end{center}
\end{figure}

\paragraph{Results and analysis} The unguided baseline, with direct access to AHI as an input, partitions the data based on its most prominent features. An analysis of the cluster profiles (see Appendix~\ref{clusters_profiles} for the full table) shows that the model excels at isolating a high-level of AHI cohort in Cluster 3, a task simplified by the direct visibility of $\boldsymbol{y}$. For the remaining population, however, the model does not find further AHI-related structure. It defaults to partitioning users based on the next largest source of variance: sleep duration. Consequently, Clusters 1 and 2 have nearly identical AHI profiles but represent different sleep behaviors. The AHI distributions in Figure~\ref{fig:ahi_not_guided} visually confirm this: while Cluster 3 is distinct, the distributions for Clusters 1 and 2 are almost perfectly overlapping.

In contrast, GCVAE uses $\boldsymbol{y}$ not as an input to be partitioned, but as a lens to find the most meaningful structure within $\boldsymbol{x}$. It is tasked with discovering groups of patients whose features in $\boldsymbol{x}$ are collectively indicative of different AHI levels. The resulting cluster profiles (see Appendix~\ref{clusters_profiles} for detailed profiles) reveal three coherent subgroups that align with a known clinical gradient of risk:

\begin{itemize}
    \item Cluster 1 (Low-Risk): A group whose profile in $\boldsymbol{x}$ (lowest age and BMI, most deep sleep) corresponds to a low AHI.
    \item Cluster 2 (Intermediate-Risk): A transitional group whose profile in $\boldsymbol{x}$ indicates a moderately increased clinical risk.
    \item Cluster 3 (High-Risk): A cohort whose profile in $\boldsymbol{x}$ (highest age and BMI, least deep sleep) corresponds to a high AHI.
\end{itemize}

\begin{figure}[ht]
\begin{center}
 \includegraphics[width=0.6\textwidth]{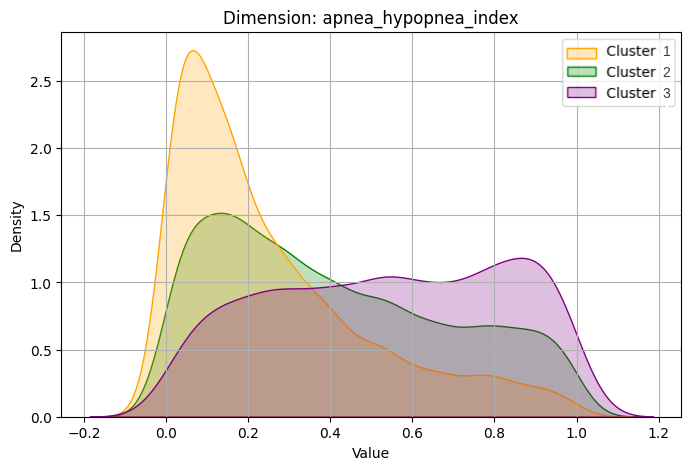}
 \caption{Distribution of the AHI within each cluster discovered by the GCVAE model.} \label{fig:ahi_guided}
\end{center}
\end{figure}

Figure~\ref{fig:ahi_guided} visualizes this result. The AHI distributions are clearly ordered, though they exhibit overlap. This overlap is a direct reflection of the challenging inference task. Unlike the baseline, which directly observes $\boldsymbol{y}$, GCVAE must infer AHI severity from complex patterns in $\boldsymbol{x}$. The overlap thus represents the inherent uncertainty of this relationship in real-world data. Crucially, this clustering is achieved solely from the sleep biometrics $\boldsymbol{x}$; unlike the baseline, the GCVAE does not have access to the guiding variable $\boldsymbol{y}$ (AHI) during inference. The key achievement is that the model successfully learned to use the guidance from $\boldsymbol{y}$ to organize $\boldsymbol{x}$ into a coherent and clinically-aligned structure.

\section{Conclusion}\label{sec:ccl}
We introduced \textbf{guided clustering}, a paradigm that formalizes the implicit guidance inherent in any practical clustering analysis. We argue that for discovery to be meaningful, the analytical context must be explicitly integrated into the model's optimization objective, rather than being relegated to an informal, trial-and-error process. The Guided Clustering Variational Autoencoder (GCVAE) is presented as one effective deep generative realization of this principle, though the paradigm itself is model-agnostic.

As an implementation of this paradigm, we proposed the GCVAE. We demonstrated that by optimizing a latent representation to be maximally informative about a guiding variable, GCVAE discovers clusters that are not only coherent but also meaningful for a given analytical task. Our experiments on both public and proprietary datasets confirmed its ability to uncover relevant structures, outperforming domain-agnostic methods and revealing clinically coherent user subgroups.

This paradigm opens several avenues for future work. The GCVAE itself can be extended by replacing its encoder and decoder with more sophisticated models, such as Graph Neural Networks or Transformers, to apply guided clustering to graph-structured or sequential data. 

By formally incorporating a guiding signal into the optimization objective, our approach provides a principled method for navigating the inherent ambiguity of what constitutes a \enquote{good} cluster, ensuring the discovered partitions are aligned with a specific analytical goal. We believe this formalization of guidance is a promising step toward developing models that can discover relevant, human-interpretable structures within vast, unannotated datasets.

{
\footnotesize
\paragraph{Data availability}
The MNIST and SVHN datasets analyzed during the current study are available in the public domain via the torchvision library. The sleep dataset used in this study is proprietary to Withings and is not publicly available due to participant privacy and GDPR restrictions; however, details are provided in the article in Appendix~\ref{appendix:dataset}.

\paragraph{Funding}
This work was supported by the Association Nationale de la Recherche et de la Technologie (ANRT) under CIFRE grant, and a private partner in the digital health sector.

\paragraph{Competing interests}
The first author is employed by a private partner. The second author declares no competing interests.

\paragraph{Author contributions}
\begin{itemize}
    \item First author: Conceptualization, methodology, software, data curation, investigation and vizualization, writing - original draft. 
    \item Second author: Conceptualization, writing - review \& editing, supervision.
\end{itemize}
}

\bibliography{biblio}
\bibliographystyle{apalike}

\clearpage
\appendix
\thispagestyle{empty}

\begin{center}

{\large\bf SUPPLEMENTARY MATERIAL}

\end{center}

\section{Development of the ELBO}\label{elbo}

\subsection{Terms of the ELBO}
Using Equation~\ref{eq:inference}, we can decompose the ELBO in five terms:
\begin{align*}
    \mathcal{L}_{\boldsymbol{\Theta}, {\boldsymbol{\phi}},\beta}(\mathbf{x},\mathbf{y}, \mathbf{z},\mathbf{c})
    &= \mathbb{E}_{q_{\boldsymbol{\phi}}(\mathbf{z}|\mathbf{x})}[ \log p_{\boldsymbol{\Theta}}(\mathbf{y}|\mathbf{z}) ] - \beta \ \text{KL}[q_{\boldsymbol{\phi}}(\mathbf{z},\mathbf{c}|\mathbf{x})||p_{\boldsymbol{\Theta}}(\mathbf{z}|\mathbf{c})p_{\boldsymbol{\Theta}}(\mathbf{c})] \\
    &= \mathbb{E}_{q_{\boldsymbol{\phi}}(\mathbf{z}|\mathbf{x})}[ \log p_{\boldsymbol{\Theta}}(\mathbf{y}|\mathbf{z}) ] 
    + \beta \left( \mathbb{E}_{q_{\boldsymbol{\phi}}(\mathbf{z},\mathbf{c}|\mathbf{x})}[ \log p_{\boldsymbol{\Theta}}(\mathbf{z}|\mathbf{c})] \right.\\
    & \hspace{1em}
    + \left. \mathbb{E}_{q_{\boldsymbol{\phi}}(\mathbf{z},\mathbf{c}|\mathbf{x})}[ \log p_{\boldsymbol{\Theta}}(\mathbf{c})] 
    - \mathbb{E}_{q_{\boldsymbol{\phi}}(\mathbf{z},\mathbf{c}|\mathbf{x})}[ \log q_{\boldsymbol{\phi}}(\mathbf{z}|\mathbf{x})] 
    - \mathbb{E}_{q_{\boldsymbol{\phi}}(\mathbf{z},\mathbf{c}|\mathbf{x})}[ \log q(\mathbf{c}|\mathbf{x})] \right) .
\end{align*}

\textbf{1st term}
\begin{align*}
\mathbb{E}_{q_{\boldsymbol{\phi}}(\mathbf{z},\mathbf{c}|\mathbf{x})}[ \log p_{\boldsymbol{\Theta}}(\mathbf{y}|\mathbf{z})] 
&= \int \sum_{\mathbf{c}} q_{\boldsymbol{\phi}}(\mathbf{z}|\mathbf{x}) q(\mathbf{c}|\mathbf{x}) \log p_{\boldsymbol{\Theta}}(\mathbf{y}|\mathbf{z}) d\mathbf{z} \\
&= \int q_{\boldsymbol{\phi}}(\mathbf{z}|\mathbf{x}) \log p_{\boldsymbol{\Theta}}(\mathbf{y}|\mathbf{z}) d\mathbf{z} \\ 
&= \mathbb{E}_{q_{\boldsymbol{\phi}}(\mathbf{z}|\mathbf{x})}[ \log p_{\boldsymbol{\Theta}}(\mathbf{y}|\mathbf{z})] \\
&= \mathbb{E}_{q_{\boldsymbol{\phi}}(\mathbf{z}|\mathbf{x})}\left[ \log \left( \frac{1}{\sqrt{(2\pi)^{d_y}}} \exp(-\frac{1}{2} ||\mathbf{y} - f_{{\boldsymbol{\theta}}}(\mathbf{z})||^2 ) \right)\right] \\
    &= -\frac{d_y}{2}\log(2\pi) -\frac{1}{2} \mathbb{E}_{q_{\boldsymbol{\phi}}(\mathbf{z}|\mathbf{x})}\left[  ||\mathbf{y} - f_{{\boldsymbol{\theta}}}(\mathbf{z})||^2 \right] .
\end{align*}

Using the SGVB estimator, we can approximate it as:
\[
\mathbb{E}_{q_{\boldsymbol{\phi}}(\mathbf{z}|\mathbf{x})}[ \log p_{\boldsymbol{\Theta}}(\mathbf{y}|\mathbf{z})] \propto - \frac{1}{L} \sum_{l=1}^L \sum_{i=1}^n ||\boldsymbol{y}_i - f_{{\boldsymbol{\theta}}}(\boldsymbol{z}_i^{(l)})||_2^2
\]

with $\boldsymbol{z}_i^{(l)} \sim \mathcal{N}(\boldsymbol{z};\tilde{\boldsymbol{\mu}}_i, \tilde{\boldsymbol{\sigma}}_i^2I)$, and $[\tilde{\boldsymbol{\mu}}_i, \log \tilde{\boldsymbol{\sigma}}_i^2] = g_{{\boldsymbol{\phi}}}(\boldsymbol{x}_i)$. $L$ is the number of Monte Carlo samples in the SGVB estimator.

\textbf{2nd term}
\begin{align*}
\mathbb{E}_{q_{\boldsymbol{\phi}}(\mathbf{z},\mathbf{c}|\mathbf{x})}[ \log p_{\boldsymbol{\Theta}}(\mathbf{z}|\mathbf{c})] 
&= \int \sum_{\mathbf{c}} q_{\boldsymbol{\phi}}(\mathbf{z}|\mathbf{x}) q(\mathbf{c}|\mathbf{x}) \log p_{\boldsymbol{\Theta}}(\mathbf{z}|\mathbf{c}) d\mathbf{z} \\ 
&= \sum_{\mathbf{c}} q(\mathbf{c}|\mathbf{x}) \int \mathcal{N}(\mathbf{z}; \tilde{\boldsymbol{\mu}}, \tilde{\boldsymbol{\sigma}}^2I) \log \mathcal{N}(\mathbf{z}; \boldsymbol{\mu}_c, \boldsymbol{\sigma}_c^2I) d\mathbf{z} .
\end{align*}
Using Lemma~\ref{lem:usefullemma} in Appendix~\ref{appendix:lemma}, we have:
\begin{align*}
\mathbb{E}_{q_{\boldsymbol{\phi}}(\mathbf{z},\mathbf{c}|\mathbf{x})}[ \log p_{\boldsymbol{\Theta}}(\mathbf{z}|\mathbf{c})] 
&= - \sum_{\mathbf{c}} q(\mathbf{c}|\mathbf{x}) \left[\frac{J}{2} \log(2\pi) + \frac{1}{2} \sum_{j=1}^J \left(\log \sigma_{cj}^2-\frac{\tilde{\sigma}_j^2}{\sigma_{cj}^2}-\frac{\left(\tilde{\mu}_j-\mu_{cj}\right)^2}{\sigma_{cj}^2} \right) \right]\\ 
&= -\frac{J}{2} \log(2\pi) - \frac{1}{2} \sum_{\mathbf{c}} q(\mathbf{c}|\mathbf{x}) \sum_{j=1}^J \left(\log \sigma_{cj}^2 + \frac{\tilde{\sigma}_j^2}{\sigma_{cj}^2} + \frac{\left(\tilde{\mu}_j-\mu_{cj}\right)^2}{\sigma_{cj}^2} \right) .
\end{align*}

\textbf{3rd term}
\begin{align*}
\mathbb{E}_{q_{\boldsymbol{\phi}}(\mathbf{z},\mathbf{c}|\mathbf{x})}[ \log p_{\boldsymbol{\Theta}}(\mathbf{c})] 
&= \int \sum_{\mathbf{c}} q_{\boldsymbol{\phi}}(\mathbf{z}|\mathbf{x}) q(\mathbf{c}|\mathbf{x}) \log p_{\boldsymbol{\Theta}}(\mathbf{c}) d\mathbf{z} \\
&= \sum_{\mathbf{c}} q(\mathbf{c}|\mathbf{x}) \log \pi_{\mathbf{c}}.
\end{align*}

\textbf{4th term}
\begin{align*}
\mathbb{E}_{q_{\boldsymbol{\phi}}(\mathbf{z},\mathbf{c}|\mathbf{x})}[ \log q_{\boldsymbol{\phi}}(\mathbf{z}|\mathbf{x})] 
&= \int \sum_{\mathbf{c}} q_{\boldsymbol{\phi}}(\mathbf{z}|\mathbf{x}) q(\mathbf{c}|\mathbf{x}) \log q_{\boldsymbol{\phi}}(\mathbf{z}|\mathbf{x}) d\mathbf{z} \\
&= \int \mathcal{N}(\mathbf{z}; \tilde{\boldsymbol{\mu}}, \tilde{\boldsymbol{\sigma}}^2I) \log \mathcal{N}(\mathbf{z}; \tilde{\boldsymbol{\mu}}, \tilde{\boldsymbol{\sigma}}^2I) d\mathbf{z} .
\end{align*}
Using Lemma \ref{lem:usefullemma} in Appendix \ref{appendix:lemma}, we have:
\[
\mathbb{E}_{q_{\boldsymbol{\phi}}(\mathbf{z},\mathbf{c}|\mathbf{x})}[ \log q_{\boldsymbol{\phi}}(\mathbf{z}|\mathbf{x})] 
= - \frac{J}{2} \log(2\pi) - \frac{1}{2} \sum_{j=1}^J(1+\log \tilde{\sigma}_j^2) .
\]

\textbf{5th term}
\begin{align*}
\mathbb{E}_{q_{\boldsymbol{\phi}}(\mathbf{z},\mathbf{c}|\mathbf{x})}[ \log q(\mathbf{c}|\mathbf{x})] 
&= \int \sum_{\mathbf{c}} q_{\boldsymbol{\phi}}(\mathbf{z}|\mathbf{x}) q(\mathbf{c}|\mathbf{x}) \log q(\mathbf{c}|\mathbf{x}) d\mathbf{z} \\
&= \sum_{\mathbf{c}} q(\mathbf{c}|\mathbf{x}) \log q(\mathbf{c}|\mathbf{x}).
\end{align*}

\subsection{Lemma} \label{appendix:lemma}

As presented in \cite{VADE}, we have Lemma \ref{lem:usefullemma}. 
\begin{lemma}
\label{lem:usefullemma}
Given two multivariate Gaussian distributions $q(z)=\mathcal{N}\left(z ; \tilde{\mu}, \tilde{\sigma}^2 I\right)$ and $p(z)=\mathcal{N}\left(z ; \mu, \sigma^2 I\right)$, we have:
\[
\int q(z) \log p(z) d z
=-\frac{J}{2} \log(2\pi) - \frac{1}{2} \sum_{j=1}^J \left(\log \sigma_j^2 + \frac{\tilde{\sigma}_j^2}{ \sigma_j^2} + \frac{\left(\tilde{\mu}_j-\mu_j\right)^2}{ \sigma_j^2} \right)
\]
where $\mu_j, \sigma_j, \tilde{\mu}_j$ and $\tilde{\sigma}_j$ simply denote the $j^{\text {th }}$ element of $\mu, \sigma, \tilde{\mu}$ and $\tilde{\sigma}$, respectively, and $J$ is the dimensionality of $z$.
\end{lemma}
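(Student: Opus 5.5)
The statement to prove is Lemma~\ref{lem:usefullemma}: the cross-entropy between two diagonal Gaussians. I will use the factorization over coordinates directly. Since both covariances are diagonal (the notation $\sigma_j$, $\tilde\sigma_j$ indicates per-coordinate variances, consistent with the diagonal $\boldsymbol{\sigma}_c$ of the generative model), the density $p$ factorizes as $p(z)=\prod_{j=1}^J \mathcal{N}(z_j;\mu_j,\sigma_j^2)$. Hence
\[
\log p(z) = \sum_{j=1}^J \Bigl(-\tfrac12\log(2\pi) - \tfrac12\log\sigma_j^2 - \tfrac{(z_j-\mu_j)^2}{2\sigma_j^2}\Bigr).
\]
Integrating against $q$ is an expectation under $q$. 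By linearity, and because each summand depends only on $z_j$, whose marginal under $q$ is $\mathcal{N}(\tilde\mu_j,\tilde\sigma_j^2)$, the problem reduces to $J$ one-dimensional computations. The constant terms give $-\frac{J}{2}\log(2\pi)-\frac12\sum_j\log\sigma_j^2$ immediately, since $q$ integrates to one.

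The only nontrivial step is the second moment $\mathbb{E}_q[(z_j-\mu_j)^2]$. I would write $z_j-\mu_j=(z_j-\tilde\mu_j)+(\tilde\mu_j-\mu_j)$ and expand the square. The cross term vanishes because $\mathbb{E}_q[z_j-\tilde\mu_j]=0$, the first square has expectation $\tilde\sigma_j^2$, and the second is the constant $(\tilde\mu_j-\mu_j)^2$. Dividing by $2\sigma_j^2$ and summing over $j$ yields
\[
-\tfrac12\sum_{j=1}^J\Bigl(\tfrac{\tilde\sigma_j^2}{\sigma_j^2}+\tfrac{(\tilde\mu_j-\mu_j)^2}{\sigma_j^2}\Bigr),
\]
and combining this with the constant terms gives the stated formula.

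There is no real obstacle. The only care points are the integrability justification, which is immediate since Gaussians have finite second moments, and the interpretation of $\sigma^2 I$ as $\mathrm{diag}(\sigma_1^2,\dots,\sigma_J^2)$. As a sanity check, taking $p=q$ gives $-\frac J2\log(2\pi)-\frac12\sum_j(1+\log\tilde\sigma_j^2)$, which matches the fourth-term computation used in Appendix~\ref{elbo}.
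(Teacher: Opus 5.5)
Your proof is correct and follows the same route as the paper: factorize over coordinates, expand $(z_j-\mu_j)^2$ around $\tilde\mu_j$, drop the vanishing cross term, and collect the constant and variance terms. The only difference is that the paper verifies $\mathbb{E}_q[(z_j-\tilde\mu_j)^2]=\tilde\sigma_j^2$ explicitly by a change of variables and integration by parts, whereas you cite it directly as the variance of a Gaussian, which is perfectly fine.
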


\begin{proof} 
{\allowdisplaybreaks
\begin{align*}
& \int q(z) \log p(z) d z = \int \mathcal{N}\left(z ; \tilde{\mu}, \tilde{\sigma}^2 I\right) \log \mathcal{N}\left(z ; \mu, \sigma^2 I\right) d z \\
& =\int \prod_{j=1}^J \frac{1}{\sqrt{2 \pi \tilde{\sigma}_j^2}} \exp \left(-\frac{\left(z_j-\tilde{\mu}_j\right)^2}{2 \tilde{\sigma}_j^2}\right) \log \left[\prod_{j=1}^J \frac{1}{\sqrt{2 \pi \sigma_j^2}} \exp \left(-\frac{\left(z_j-\mu_j\right)^2}{2 \sigma_j^2}\right)\right] d z \\
& =\sum_{j=1}^J \int \frac{1}{\sqrt{2 \pi \tilde{\sigma}_j^2}} \exp \left(-\frac{\left(z_j-\tilde{\mu}_j\right)^2}{2 \tilde{\sigma}_j^2}\right) \log \left[\frac{1}{\sqrt{2 \pi \sigma_j^2}} \exp \left(-\frac{\left(z_j-\mu_j\right)^2}{2 \sigma_j^2}\right)\right] d z_j \\
& =\sum_{j=1}^J \int \frac{1}{\sqrt{2 \pi \tilde{\sigma}_j^2}} \exp \left(-\frac{\left(z_j-\tilde{\mu}_j\right)^2}{2 \tilde{\sigma}_j^2}\right)\left[-\frac{1}{2} \log \left(2 \pi \sigma_j^2\right)\right] d z_j \\
&\hspace{1cm} -\int \frac{1}{\sqrt{2 \pi \tilde{\sigma}_j^2}} \exp \left(-\frac{\left(z_j-\tilde{\mu}_j\right)^2}{2 \tilde{\sigma}_j^2}\right) \frac{\left(z_j-\mu_j\right)^2}{2 \sigma_j^2} d z_j \\
& =\sum_{j=1}^J-\frac{1}{2} \log \left(2 \pi \sigma_j^2\right)\\
&\hspace{1cm} -\int \frac{1}{\sqrt{2 \pi \tilde{\sigma}_j^2}} \exp \left(-\frac{\left(z_j-\tilde{\mu}_j\right)^2}{2 \tilde{\sigma}_j^2}\right) \frac{\left(z_j-\tilde{\mu}_j\right)^2+2\left(z_j-\tilde{\mu}_j\right)\left(\tilde{\mu}_j-\mu_j\right)+\left(\tilde{\mu}_j-\mu_j\right)^2}{2 \tilde{\sigma}_j^2} \frac{\tilde{\sigma}_j^2}{\sigma_j^2} d z_j \\
& = C
-\frac{\tilde{\sigma}_j^2}{\sigma_j^2} \int \frac{1}{\sqrt{2 \pi \tilde{\sigma}_j^2}} \exp \left(-\frac{\left(z_j-\tilde{\mu}_j\right)^2}{2 \tilde{\sigma}_j^2}\right) \frac{\left(z_j-\tilde{\mu}_j\right)^2}{2 \tilde{\sigma}_j^2} d z_j \\
&\hspace{1cm}- \frac{\left(\tilde{\mu}_j-\mu_j\right)}{2 \sigma_j^2} \underbrace{\int \frac{1}{\sqrt{2 \pi \tilde{\sigma}_j^2}} \exp \left(-\frac{\left(z_j-\tilde{\mu}_j\right)^2}{2 \tilde{\sigma}_j^2}\right) \left(z_j-\tilde{\mu}_j\right)  d z_j}_\text{$=0$} \\
&\hspace{1cm} - \frac{\left(\tilde{\mu}_j-\mu_j\right)^2}{2 \sigma_j^2}  \underbrace{\int \frac{1}{\sqrt{2 \pi \tilde{\sigma}_j^2}} \exp \left(-\frac{\left(z_j-\tilde{\mu}_j\right)^2}{2 \tilde{\sigma}_j^2}\right)  d z_j}_\text{$=1$} \\
& = C-\frac{\tilde{\sigma}_j^2}{\sigma_j^2} \int \frac{1}{\sqrt{2 \pi \tilde{\sigma}_j^2}} \exp \left(-\frac{\left(z_j-\tilde{\mu}_j\right)^2}{2 \tilde{\sigma}_j^2}\right) \frac{\left(z_j-\tilde{\mu}_j\right)^2}{2 \tilde{\sigma}_j^2} d z_j-\frac{\left(\tilde{\mu}_j-\mu_j\right)^2}{2 \sigma_j^2}  \\
& =C-\frac{\tilde{\sigma}_j^2}{\sigma_j^2} \int \frac{1}{\sqrt{2 \pi}} \exp \left(-\frac{x_j^2}{2}\right) \frac{x_j^2}{2} d x_j-\frac{\left(\tilde{\mu}_j-\mu_j\right)^2}{2 \sigma_j^2} \tag*{\textit{By change of variables}} \\
& =C-\frac{\tilde{\sigma}_j^2}{\sigma_j^2} \int \frac{1}{\sqrt{2 \pi}}\left(-\frac{x_j}{2}\right) d\left(\exp \left(-\frac{x_j^2}{2}\right)\right)-\frac{\left(\tilde{\mu}_j-\mu_j\right)^2}{2 \sigma_j^2} \tag*{\textit{By change of variables}}\\
& = C - \frac{\tilde{\sigma}_j^2}{\sigma_j^2} \left[ \left[\frac{1}{\sqrt{2 \pi}}\left(-\frac{x_j}{2}\right) \exp \left(-\frac{x_j^2}{2}\right)\right]_{-\infty} ^{+\infty} - \int \frac{1}{\sqrt{2 \pi}} \exp \left(-\frac{x_j^2}{2}\right) d\left(-\frac{x_j}{2}\right)\right]-\frac{\left(\tilde{\mu}_j-\mu_j\right)^2}{2 \sigma_j^2} \tag*{\textit{By integration by parts}}\\
& = C - \frac{\tilde{\sigma}_j^2}{\sigma_j^2} \left[ (0-0) - (-\frac{1}{2}) \underbrace{\int \frac{1}{\sqrt{2 \pi}} \exp \left(-\frac{x_j^2}{2}\right) dx_j}_{=1} \right]-\frac{\left(\tilde{\mu}_j-\mu_j\right)^2}{2 \sigma_j^2}\\
& =\sum_{j=1}^J-\frac{1}{2} \log \left(2 \pi \sigma_j^2\right)-\frac{\tilde{\sigma}_j^2}{2 \sigma_j^2}-\frac{\left(\tilde{\mu}_j-\mu_j\right)^2}{2 \sigma_j^2} \\
&= -\frac{J}{2} \log(2\pi) - \frac{1}{2} \sum_{j=1}^J \left(\log \sigma_j^2 + \frac{\tilde{\sigma}_j^2}{\sigma_j^2}+\frac{\left(\tilde{\mu}_j-\mu_j\right)^2}{\sigma_j^2} \right)
\end{align*}}
where $C$ denotes $\sum_{j=1}^J-\frac{1}{2} \log \left(2 \pi \sigma_j^2\right)$ for simplicity.
\end{proof}

\section{Approximate $q(\mathbf{c}|\mathbf{x})$}\label{approx}

We describe how to formulate $q(\mathbf{c}|\mathbf{x})$ to maximize the ELBO. Specifically, our objective can be rewritten as:

\begin{align*}
\mathcal{L}_{\boldsymbol{\Theta}, {\boldsymbol{\phi}},\beta}(\mathbf{x},\mathbf{y}, \mathbf{z},\mathbf{c})
&= \mathbb{E}_{q_{\boldsymbol{\phi}}(\mathbf{z},\mathbf{c}|\mathbf{x})} [\log p_{\boldsymbol{\theta}}(\mathbf{y}|\mathbf{z},\mathbf{c})] 
- \beta \ \text{KL}[q_{\boldsymbol{\phi}}(\mathbf{z},\mathbf{c}|\mathbf{x})||p_{\boldsymbol{\Theta}}(\mathbf{z},\mathbf{c})] \\
&= \int \sum_{\mathbf{c}} q_{\boldsymbol{\phi}}(\mathbf{z}|\mathbf{x})q(\mathbf{c}|\mathbf{x}) \log \frac{p_{\boldsymbol{\Theta}}(\mathbf{y}|\mathbf{z}) p_{\boldsymbol{\Theta}}(\mathbf{c}|\mathbf{z})^\beta p_{\boldsymbol{\Theta}}(\mathbf{z})^\beta}{q_{\boldsymbol{\phi}}(\mathbf{z}|\mathbf{x})^\beta q(\mathbf{c}|\mathbf{x})^\beta} d\mathbf{z} \\
&= \int q_{\boldsymbol{\phi}}(\mathbf{z}|\mathbf{x}) \log \frac{p_{\boldsymbol{\Theta}}(\mathbf{y}|\mathbf{z})p_{\boldsymbol{\Theta}}(\mathbf{z})^\beta}{q_{\boldsymbol{\phi}}(\mathbf{z}|\mathbf{x})^\beta} d\mathbf{z} 
+ \int q_{\boldsymbol{\phi}}(\mathbf{z}|\mathbf{x}) \sum_{\mathbf{c}} q(\mathbf{c}|\mathbf{x}) \log \frac{p_{\boldsymbol{\Theta}}(\mathbf{c}|\mathbf{z})^\beta}{q(\mathbf{c}|\mathbf{x})^\beta} d\mathbf{z} \\
&= \int q_{\boldsymbol{\phi}}(\mathbf{z}|\mathbf{x}) \log \frac{p_{\boldsymbol{\Theta}}(\mathbf{y}|\mathbf{z})p_{\boldsymbol{\Theta}}(\mathbf{z})^\beta}{q_{\boldsymbol{\phi}}(\mathbf{z}|\mathbf{x})^\beta} d\mathbf{z} 
- \int q_{\boldsymbol{\phi}}(\mathbf{z}|\mathbf{x}) \ \beta \ \text{KL}[q(\mathbf{c}|\mathbf{x})||p_{\boldsymbol{\Theta}}(\mathbf{c}|\mathbf{z})] d\mathbf{z} .
\end{align*}

As in \cite{VADE}, the first term does not depend on $\mathbf{c}$ and the second term is non-negative. Thus, maximizing the lower bound ELBO with respect to $q(\mathbf{c}|\mathbf{x})$ requires that $\text{KL}[q(\mathbf{c}|\mathbf{x})||p_{\boldsymbol{\Theta}}(\mathbf{c}|\mathbf{z})] = 0$. Thus, with $\nu$ a constant, we have:
\[
\frac{q(\mathbf{c}|\mathbf{x})}{p_{\boldsymbol{\Theta}}(\mathbf{c}|\mathbf{z})} = \nu .
\]

Since $\sum_{\mathbf{c}} q(\mathbf{c}|\mathbf{x}) = 1$ and $\sum_{\mathbf{c}} p_{\boldsymbol{\Theta}}(\mathbf{c}|\mathbf{z}) = 1$, we have:
\[
\frac{q(\mathbf{c}|\mathbf{x})}{p_{\boldsymbol{\Theta}}(\mathbf{c}|\mathbf{z})} = 1 .
\]
Taking the expectation on both sides, we can obtain:
\[
q(\mathbf{c}|\mathbf{x}) = \mathbb{E}_{q_{\boldsymbol{\phi}}(\mathbf{z}|\mathbf{x})} [p_{\boldsymbol{\Theta}}(\mathbf{c}|\mathbf{z})] .
\]

\section{Verification of GVI assumptions}\label{appendix:cv}

In this section, we provide the detailed verification that the GCVAE model satisfies the assumptions required for Theorem~2 (Consistency under Independence) in \cite{knoblauch2019frequentist}.

We denote the global parameters by $\boldsymbol{\psi} = \{\boldsymbol{\Theta}, {\boldsymbol{\phi}}\} \in \boldsymbol{\Psi}$, where $\boldsymbol{\Theta}$ contains the generative parameters (GMM and decoder weights) and ${\boldsymbol{\phi}}$ contains the variational parameters (encoder weights). 
We denote by $\mathcal{X} = \mathcal{X}^o \times \mathcal{Z}$, where $\mathcal{X}^o$ and $\mathcal{Z}$ denote the spaces of the observables $(\boldsymbol{x}_i, \boldsymbol{y}_i)$ and the latent components $(\boldsymbol{z}_{i}, \boldsymbol{c}_i)$.

\paragraph{Assumption 1.} The GVI problem is well-defined.

\begin{enumerate}
    \item \textit{The loss function $\mathcal{L}: \boldsymbol{\Psi} \times \mathcal{X} \rightarrow \mathbb{R}$ is discontinuous at most at finitely many points.} \\
    By using the Gumbel-Softmax relaxation, the loss function is a composition of continuous functions and is therefore continuous everywhere (0 points of discontinuity). 
    \item \textit{For any $(\boldsymbol{x}_i, \boldsymbol{y}_i) \in \mathcal{X}^o$ and any $n, \mathcal{L}\left(\boldsymbol{\psi}, \boldsymbol{x}_i, \boldsymbol{y}_i, \boldsymbol{z}_{i}, \boldsymbol{c}_i\right)<\infty$ for all $(\boldsymbol{z}_{i}, \boldsymbol{c}_i) \in \mathcal{Z}$.} \\
    Since the log-densities of the Gaussian and Concrete distributions are finite everywhere on their respective open supports (assuming $\boldsymbol{\sigma} > 0$, $\boldsymbol{\pi} > 0$ and $\boldsymbol{c}_k > 0$), the pointwise loss value is finite for any generated sample.
    \item \textit{The minimizers $\hat{\boldsymbol{\psi}}_n=\arg \min_{\boldsymbol{\psi}}\left\{\frac{1}{n} \sum_{i=1}^n \mathcal{L}\left(\boldsymbol{\psi}, \boldsymbol{x}_i, \boldsymbol{y}_i, \boldsymbol{z}_{i}, \boldsymbol{c}_i\right)\right\} \in \boldsymbol{\Psi}$ exist for all $n$.} \\
    Since the loss function is continuous and defined on a compact set $\boldsymbol{\Psi}$ (by hypothesis), the Weierstrass extreme value theorem guarantees that a global minimizer exists.
\end{enumerate}
For a probability measure $\mu$ on $\mathcal{X}$:
\begin{enumerate}
  \setcounter{enumi}{3}
    \item \textit{The loss satisfies a law of large numbers, i.e $\frac{1}{n} \sum_{i=1}^n \mathcal{L}\left(\boldsymbol{\psi}, \boldsymbol{x}_i, \boldsymbol{y}_i, \boldsymbol{z}_{i}, \boldsymbol{c}_i\right) \xrightarrow{\mu-\text { a.s. }} \mathbb{E}_\mu[\mathcal{L}(\boldsymbol{\psi}, \boldsymbol{x}, \boldsymbol{y}, \boldsymbol{z}, \boldsymbol{c})]$.} \\
    We make the standard statistical learning assumption that the training observations $\{(\boldsymbol{x}_i, \boldsymbol{y}_i)\}_{i=1}^n$ are i.i.d. Under the i.i.d. assumption, the law of large numbers applies.
    \item \textit{The $\mu$-population-minimizer $\boldsymbol{\psi}^*=\arg \min_{\boldsymbol{\psi}} \mathbb{E}_\mu[\mathcal{L}(\boldsymbol{\psi}, \boldsymbol{x}, \boldsymbol{y}, \boldsymbol{z}, \boldsymbol{c})] \in \boldsymbol{\Psi}$ exists and is unique.} 
    \begin{itemize}
        \item Existence: As established in Assumption 1.1, the pointwise loss function is continuous. Under the integrability condition (Assumption 1.2), this implies the population risk is continuous with respect to $\boldsymbol{\psi}$. Since the parameter space $\boldsymbol{\Psi}$ is compact, the Weierstrass extreme value theorem guarantees that it attains its minimum. Thus, a minimizer exists.
        \item Uniqueness: Neural networks are inherently non-identifiable due to permutation symmetries.However, we make the standard identifiability assumption that the global minimum is unique up to these permutation symmetries. That is, we assume that any two distinct parameter vectors achieving the minimum are functionally equivalent and differ only by a permutation of their indices.
    \end{itemize}
    \item \textit{The loss is finite in $\mu$-expectation, i.e. $\mathbb{E}_\mu[\mathcal{L}(\boldsymbol{\psi}, \boldsymbol{x}, \boldsymbol{y}, \boldsymbol{z}, \boldsymbol{c})]<\infty$ for all $\boldsymbol{\psi} \in \boldsymbol{\Psi}$.} \\
    We previously established that the loss is finite pointwise. The integration over the latent space $\mathcal{Z}$ is well-defined because the Gaussian variational distributions have finite moments. Furthermore, since the neural networks are Lipschitz continuous on the compact parameter space $\boldsymbol{\Psi}$, the loss grows at most quadratically with respect to the observations $(\boldsymbol{x}, \boldsymbol{y})$. Therefore, under the standard assumption that the data distribution has finite second moments, the total expectation is finite.
    \item \textit{One of the following holds true: $\mathbb{E}_\mu[\mathcal{L}(\boldsymbol{\psi}, \boldsymbol{x}, \boldsymbol{y}, \boldsymbol{z}, \boldsymbol{c})]$ is coercive in $\boldsymbol{\psi}$ or $\boldsymbol{\Psi}$ is compact.} \\
    We define $\boldsymbol{\Psi}$ as a compact set.
\end{enumerate}

\paragraph{Assumption 2} \textit{The variational family $\mathcal{Q}=\mathcal{Q}^{\boldsymbol{\psi}} \times \mathcal{Q}^z$ with $\mathcal{Q}^{\boldsymbol{\psi}}=\{q(\boldsymbol{\psi} \mid \boldsymbol{\kappa}): \boldsymbol{\kappa} \in \boldsymbol{K}\}$ and $\mathcal{Q}^z= \left\{q^z\left(\boldsymbol{z}, \boldsymbol{c} \mid \boldsymbol{\eta}\right): \boldsymbol{\eta} \in \boldsymbol{H}\right\}$ consists of absolutely continuous densities with respect to the Lebesgue measure. Moreover, for all $n \in \mathbb{N}$ and any $\left(\boldsymbol{\psi}^*, \boldsymbol{z}, \boldsymbol{c}\right) \in \boldsymbol{\Psi} \times \mathcal{Z}$, there exist sequences $\left\{\boldsymbol{\kappa}_k\right\}_{k=1}^{\infty}$ and $\left\{\boldsymbol{\eta}_k\right\}_{k=1}^{\infty}$ of variational parameters so that $q\left(\boldsymbol{\psi} \mid \boldsymbol{\kappa}_k\right) \xrightarrow{\mathcal{D}}  \delta_{\boldsymbol{\psi}^*}(\boldsymbol{\psi})$ and $q^z\left(\boldsymbol{z}, \boldsymbol{c} \mid \boldsymbol{\eta}_k\right) \xrightarrow{\mathcal{D}} \delta_z\left(\boldsymbol{z}, \boldsymbol{c} \right)$ as $k \rightarrow \infty$.}

The proposed variational family satisfies these regularity conditions by design. First, regarding absolute continuity, the employment of the Gumbel-Softmax relaxation endows the latent distribution with a valid, absolutely continuous density with respect to the Lebesgue measure on the simplex. The weak convergence to the Dirac measure is achieved in the distributional limit as the temperature $\tau \to 0$ and variance $\boldsymbol{\sigma}^2 \to 0$.

\paragraph{Assumption 3} \textit{The GVI uncertainty quantifier $D: \mathcal{P}(\boldsymbol{\Psi})^2 \rightarrow \mathbb{R}_{+}$ is a statistical divergence. Further, it is lower semi-continuous in its first argument with respect to the weak topology of $\mathcal{P}(\boldsymbol{\Psi})$.}

We employ the Kullback-Leibler (KL) divergence as the uncertainty quantifier. The KL divergence is a well-defined statistical divergence and is rigorously known to be lower semi-continuous. Thus, the assumption is satisfied.

\paragraph{Assumption 4} \textit{The prior $\pi$ and the GVI uncertainty quantifier $D$ are suitable for the variational family $\mathcal{Q}^{\boldsymbol{\psi}}$ : For all $q \in \mathcal{Q}^{\boldsymbol{\psi}}, D(q \| \pi)<\infty$.}

The prior is modeled as a Gaussian Mixture Model, and the variational family $\mathcal{Q}$ is a mean-field product of Gaussian and Gumbel-Softmax distributions. Since the variational components share the same support as the prior and possess finite moments, the KL divergence is well-defined and finite for all valid variational parameters.

\paragraph{Assumption 5} \textit{The prior belief $\pi_prior$ about $\boldsymbol{\psi}$ is not infinitely bad: $\mathbb{E}_{\pi_{prior}}\left[\mathbb{E}_\mu[\mathcal{L}(\boldsymbol{\psi}, \boldsymbol{x}, \boldsymbol{y}, \boldsymbol{z}, \boldsymbol{c})] \right] = C_{\pi_{prior}}<\infty$. Moreover, $\mathcal{Q}^{\boldsymbol{\psi}}$ contains the singleton ${\pi_{prior}}(\boldsymbol{\psi})$. In other words, $\mathcal{Q}^{\boldsymbol{\psi}}=\{q(\boldsymbol{\psi} \mid \boldsymbol{\kappa}): \boldsymbol{\kappa} \in \boldsymbol{K}\} \cup\{{\pi_{prior}}(\boldsymbol{\psi})\}$.} \\
The prior ${\pi_{prior}}(\boldsymbol{\psi})$ is modeled as a standard Gaussian. Since the loss function is continuous and grows polynomially with respect to the weights (Lipschitz neural networks), and the Gaussian prior has finite moments of all orders, the expected loss under the prior is strictly finite.
The second condition is satisfied by construction.

\paragraph{Assumption 7} \textit{There exists a compact subset $A \subset \boldsymbol{\Psi}$ so that (i) $\boldsymbol{\psi}^* \in A$ and (ii) $\pi_{prior} \geq \bar{q}_n$ on $\boldsymbol{\Psi} \backslash A$, for all $n \geq N$ for some $N<\infty$.}  \\
Since we explicitly defined the parameter space $\boldsymbol{\Psi}$ to be compact (Assumption 1.7), we can choose the subset $A$ to be the entire space, i.e., $A = \boldsymbol{\Psi}$.
\begin{enumerate}[(i)]
\item The minimizer $\boldsymbol{\psi}^*$ must reside within the valid parameter space $\boldsymbol{\Psi}$.
\item The complement set $\boldsymbol{\Psi} \backslash A$ is the empty set.
\end{enumerate}

Therefore our model verifies the assumption of the theorem: 
\begin{theorem}[GVI consistency under independence]
    If Assumptions 1, 2, 3, 4, 5 and 7 hold and $(\boldsymbol{x}_i, \boldsymbol{y}_i) \stackrel{i i d}{\sim} (\boldsymbol{x}_1, \boldsymbol{y}_1)$, then the GVI posteriors are consistent. i.e., $q_n \xrightarrow{\mathcal{D}} \delta_{\boldsymbol{\psi}^*}$  $\mu$-almost surely, where $\mu$ is the probability measure on $(\boldsymbol{x}_1, \boldsymbol{y}_1)$.
\end{theorem}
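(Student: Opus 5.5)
The theorem immediately above is Theorem~2 of \cite{knoblauch2019frequentist} specialised to GCVAE. I would therefore not reprove the GVI machinery. The plan is to recall the structure of that argument and check that each step goes through with the verified Assumptions 1--5 and 7.

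The GVI posterior is
\[
q_n=\arg\min_{q\in\mathcal{Q}^{\boldsymbol{\psi}}}\Big\{\mathbb{E}_q\Big[\tfrac{1}{n}\textstyle\sum_{i=1}^n \mathcal{L}(\boldsymbol{\psi},\boldsymbol{x}_i,\boldsymbol{y}_i,\boldsymbol{z}_i,\boldsymbol{c}_i)\Big]+\tfrac{1}{n}D(q\|\pi_{prior})\Big\}.
\]
The proof proceeds in three steps.

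\textbf{Step 1: pointwise convergence of the objective.} Write $R_n(\boldsymbol{\psi})=\frac1n\sum_i\mathcal{L}(\boldsymbol{\psi},\cdot)$ and $R(\boldsymbol{\psi})=\mathbb{E}_\mu[\mathcal{L}(\boldsymbol{\psi},\cdot)]$. By Assumption 1.4 (the i.i.d.\ law of large numbers) and Assumption 1.6 (integrability), $R_n\to R$ $\mu$-a.s.\ for each fixed $\boldsymbol{\psi}$. On the compact $\boldsymbol{\Psi}$, continuity of $\mathcal{L}$ (Assumption 1.1), combined with a dominating envelope from the quadratic growth bound, upgrades this to uniform convergence $\sup_{\boldsymbol{\Psi}}|R_n-R|\to0$ a.s. This is a Glivenko--Cantelli / uniform law of large numbers argument for continuous integrands on a compact set.

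\textbf{Step 2: an upper bound via Dirac approximants.} Assumption 2 supplies $q(\cdot\mid\boldsymbol{\kappa}_k)\to\delta_{\boldsymbol{\psi}^*}$. Comparing $q_n$ with these approximants gives
\[
\mathbb{E}_{q_n}[R_n]\le \mathbb{E}_{q(\cdot\mid\boldsymbol{\kappa}_k)}[R_n]+\tfrac1n D(q(\cdot\mid\boldsymbol{\kappa}_k)\|\pi_{prior}).
\]
The divergence term is finite by Assumption 4, and $C_{\pi_{prior}}<\infty$ (Assumption 5) covers the degenerate choice $q=\pi_{prior}$. Now let $n\to\infty$ and then $k\to\infty$; the order of the limits lets the $D/n$ term vanish. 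Using uniform convergence and continuity of $R$, this yields $\limsup_n\mathbb{E}_{q_n}[R]\le R(\boldsymbol{\psi}^*)$.

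\textbf{Step 3: tightness, limit identification, and the main obstacle.} Since $D\ge0$ (Assumption 3) and $R\ge R(\boldsymbol{\psi}^*)$, Step 2 gives $\mathbb{E}_{q_n}[R-R(\boldsymbol{\psi}^*)]\to0$. Tightness of $\{q_n\}$ holds trivially because $A=\boldsymbol{\Psi}$ is compact (Assumption 7), so Prokhorov gives weakly convergent subsequences. Any limit $q_\infty$ satisfies $\mathbb{E}_{q_\infty}[R]=R(\boldsymbol{\psi}^*)$ by continuity and boundedness of $R$ on $\boldsymbol{\Psi}$. Hence $q_\infty$ is supported on $\arg\min R$, which by the uniqueness part of Assumption 1.5 is $\{\boldsymbol{\psi}^*\}$. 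It follows that $q_n\to\delta_{\boldsymbol{\psi}^*}$ $\mu$-a.s.

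I expect this identification step to be the real obstacle. Uniqueness holds only up to permutation symmetries, so strictly one obtains convergence to the equivalence class of minimizers, that is, to point mass on the quotient $\boldsymbol{\Psi}/\!\sim$. I would first handle this by working on the quotient space, on which $R$ is well defined and continuous. The alternative is to restrict $\boldsymbol{\Psi}$ to a fundamental domain, for example with ordered cluster means and ordered hidden units, so that the minimizer is genuinely unique.
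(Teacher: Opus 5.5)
The paper proves this statement purely by citation: it is Theorem~2 of \cite{knoblauch2019frequentist}, obtained there with $\Gamma$-convergence tools, and the paper's own work is only the assumption check of Appendix~\ref{appendix:cv}. You invoke the same theorem, but the substance of your proposal is a different, self-contained argument: a uniform LLN on the compact $\boldsymbol{\Psi}$, an upper bound from the Dirac approximants of Assumption~2 (with $D/n\to 0$ by Assumption~4), then Prokhorov and identification. This is in effect the $\Gamma$-convergence scheme (recovery sequence, liminf bound, equi-coercivity) done by hand.

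Your argument is sound, and it buys transparency. Uniqueness enters only at the very end: $\mathbb{E}_{q_n}[R]\to\min R$ together with continuity of $R$ on the compact $\boldsymbol{\Psi}$ already gives $q_n(\{\boldsymbol{\psi}:\mathrm{dist}(\boldsymbol{\psi},\arg\min R)>\varepsilon\})\to 0$ by Markov's inequality. The price is a hypothesis the theorem does not list. Step~1, and the continuity of $R$ used in Steps~2--3, need a domination condition such as $\mathbb{E}_\mu\bigl[\sup_{\boldsymbol{\psi}\in\boldsymbol{\Psi}}|\mathcal{L}(\boldsymbol{\psi},\cdot)|\bigr]<\infty$, which is strictly stronger than the pointwise Assumptions~1.4 and~1.6.

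The ``quadratic growth'' you borrow from Appendix~\ref{appendix:cv} does not supply this envelope. Since $\log\tilde{\boldsymbol{\sigma}}$ is an encoder output, a Lipschitz $g_{\boldsymbol{\phi}}$ only gives $\tilde{\sigma}_j^2\le e^{C(1+\|\boldsymbol{x}\|)}$. Both $\tilde{\sigma}_j^2/\sigma_{c,j}^2$ and the reconstruction term (through $\boldsymbol{z}=\tilde{\boldsymbol{\mu}}+\tilde{\boldsymbol{\sigma}}\boldsymbol{\varepsilon}$, componentwise) can therefore grow exponentially in $\|\boldsymbol{x}\|$. Take the envelope instead from the boundedness of the normalized inputs used in both experiments, or assume an exponential moment of $\|\boldsymbol{x}\|$.

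A uniform LLN also needs $\boldsymbol{\psi}\mapsto R_n(\boldsymbol{\psi})$ to be a single random function. So read the latent arguments through the reparametrization: Gaussian and Gumbel noise with a $\boldsymbol{\psi}$-free law, appended to the i.i.d.\ data $(\boldsymbol{x}_i,\boldsymbol{y}_i)$. Because the latent posterior is amortized through $\boldsymbol{\phi}$, this is also why the $\mathcal{Q}^z$ half of Assumption~2 never enters your sketch.

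Your permutation remark is correct and identifies a real defect. The defect lies in the paper's verification, not in the conditional theorem, where Assumption~1.5 simply posits uniqueness. Appendix~\ref{appendix:cv} only assumes uniqueness modulo cluster relabellings and hidden-unit permutations, so $\arg\min R$ is a finite orbit $G\boldsymbol{\psi}^*$. The literal conclusion $q_n\to\delta_{\boldsymbol{\psi}^*}$ is then not justified: weak limit points are only known to be supported on the orbit, and $q_n$ may alternate between neighbourhoods of different orbit points.

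Since your argument already yields concentration on $\arg\min R$, no separate construction is needed. The correct conclusion is $q_n(\{\boldsymbol{\psi}:\mathrm{dist}(\boldsymbol{\psi},G\boldsymbol{\psi}^*)>\varepsilon\})\to 0$, i.e.\ convergence to $\delta_{[\boldsymbol{\psi}^*]}$ on $\boldsymbol{\Psi}/G$ for a $G$-invariant $\boldsymbol{\Psi}$. The general form, concentration on $\arg\min R$, even covers continuous symmetries such as positive rescalings in ReLU layers, which the paper's assumption rules out. Prefer this to the fundamental-domain route, where a closed domain can still contain several orbit points on its boundary.
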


\section{Withings dataset}\label{appendix:dataset}

The dataset is composed of 50,000 individuals, each contributing one night of sleep data to avoid bias from repeated measurements for the same user. 
The data was recorded by the Withings Sleep Analyzer\footnote{\url{https://www.withings.com/us/en/sleep}}, capturing comprehensive sleep and biometric information. The dataset includes an equal number of users across the three categories based on the Apnea-Hypopnea Index (AHI) as categorized by the American Academy of Sleep Medicine (AASM) \cite{ahi}: no or mild ($<$15 events/hour), moderate (15–30 events/hour), and severe ($>$30 events/hour).

\subsection{Variables overview}
The seven variables in the input vector $\boldsymbol{x}$ are:
\begin{itemize}
    \item \textit{sleep\_duration}: Duration of the user's sleep during the night.
    \item \textit{light\_sleep\_duration}: Duration of the user's light sleep during the night.
    \item \textit{deep\_sleep\_duration}: Duration of the user's deep sleep during the night.
    \item \textit{nb\_sleep\_interruptions}: Count of awakenings throughout the night.
    \item \textit{avg\_night\_hr}: Mean heart rate during the night.
    \item \textit{bmi}: User's average Body Mass Index measured over a year. 
    \item \textit{age}: User’s age.
\end{itemize}

The guiding variable $y$ is the \textit{apnea\_hypopnea\_index} (AHI), categorizing sleep apnea severity based on the number of apnea-hypopnea events per hour.

\subsection{Dataset Summary}
In summary, Table \ref{tab:dataset_summary} shows a concise view of each variable’s range, mean, and standard deviation. 

\begin{table}[!ht]
\centering
\caption{Descriptive statistics of the variables}
\label{tab:dataset_summary}
\begin{tabular}{l|ccc}
\hline
\textbf{Variable} (unit) & \textbf{Range} & \textbf{Mean} & \textbf{Std. Dev.} \\
\hline
\textit{sleep\_duration}   (seconds)      & 14880 -- 36000 & 26224 & 4224 \\
\textit{light\_sleep\_duration}  (seconds) & 3600 -- 31860  & 15747 & 4651 \\
\textit{deep\_sleep\_duration}  (seconds)  & 3600 -- 32220  & 10472 & 4022 \\
\textit{nb\_sleep\_interruptions}          & 0 -- 20        & 2.74  & 2.34 \\
\textit{avg\_night\_hr}    (bpm)           & 40 -- 111      & 62.49 & 8.57 \\
\textit{bmi}                (kg/m²)      & 16 -- 50       & 27.53 & 5.14 \\
\textit{age}              (years)        & 18 -- 80       & 50    & 12.67 \\
\textit{apnea\_hypopnea\_index}            & 0 -- 40        & 18.14 & 13.47 \\
\hline
\end{tabular}
\end{table}

We applied Min-Max normalization to all the variables to ensure they fall on a comparable scale.

\section{Clusters profile tables for the Withings case study}\label{clusters_profiles}

This section provides the detailed cluster profiles for the user subgroups discovered by the unguided baseline and our proposed Guided Model (GCVAE), as discussed in Section \ref{subsec:sleep_dataset}. All values for user features are denormalized and presented as Mean ± Standard Deviation. 

\begin{table}[ht!]
\centering
\caption{Cluster profiles for the \textbf{unguided model}.}
\label{tab:unguided_profiles}
\begin{tabular}{l|ccc}
\toprule
 & \textbf{Cluster 1} & \textbf{Cluster 2} & \textbf{Cluster 3} \\
\midrule
AHI & 9.43 $\pm$ 6.92 & 28.55 $\pm$ 7.50 & 8.55 $\pm$ 6.69 \\
Age (years) & 49.61 $\pm$ 11.52 & 55.91 $\pm$ 11.78 & 49.36 $\pm$ 12.48 \\
BMI (kg/m²) & 27.87 $\pm$ 5.08 & 29.25 $\pm$ 5.16 & 26.93 $\pm$ 4.84 \\
Sleep Duration (hrs) & 6.63 $\pm$ 0.81 & 7.30 $\pm$ 1.18 & 8.28 $\pm$ 0.83 \\
Deep Sleep (hrs) & 2.94 $\pm$ 1.01 & 2.91 $\pm$ 1.16 & 3.32 $\pm$ 1.15 \\
Light Sleep (hrs) & 3.69 $\pm$ 1.00 & 4.39 $\pm$ 1.27 & 4.96 $\pm$ 1.19 \\
Sleep Interruptions & 2.28 $\pm$ 2.09 & 3.04 $\pm$ 2.50 & 2.75 $\pm$ 2.25 \\
Avg. Night HR (bpm) & 63.66 $\pm$ 8.59 & 63.99 $\pm$ 8.50 & 61.98 $\pm$ 8.38 \\
Avg. Resp. Rate (rpm) & 15.11 $\pm$ 2.14 & 15.11 $\pm$ 2.26 & 15.06 $\pm$ 2.10 \\
\bottomrule
\end{tabular}
\end{table}

\begin{table}[ht!]
\centering
\caption{Clinical profiles of user subgroups discovered by \textbf{GCVAE}.}
\label{tab:guided_profiles}
\begin{tabular}{l|ccc}
\toprule
 & \textbf{Cluster 1}   & \textbf{Cluster 2} & \textbf{Cluster 3} \\
\midrule
AHI & 10.4 ± 9.6 & 16.4 ± 10.8 & 21.6 ± 11.2 \\
Age (years) & 42.58 ± 9.88   & 53.35 ± 10.39 & 61.90 ± 9.73 \\
BMI (kg/m²)& 25.35 ± 3.67    & 28.25 ± 4.57  & 31.79 ± 5.78 \\

Sleep Duration (hrs) & 7.52 ± 1.12 & 7.36 ± 1.18 & 7.19 ± 1.19 \\
Deep Sleep (hrs)     & 3.41 ± 1.11 & 3.04 ± 1.09 & 2.50 ± 0.98 \\
Light Sleep (hrs)    & 4.10 ± 1.23 & 4.32 ± 1.27 & 4.68 ± 1.25 \\
Sleep Interruptions  & 2.03 ± 1.89 & 2.64 ± 2.17 & 3.83 ± 2.74 \\

Avg. Night HR (bpm)   & 61.49 ± 8.08 & 63.66 ± 8.54 & 65.01 ± 8.77 \\
Avg. Resp. Rate (rpm) & 15.08 ± 2.05 & 15.09 ± 2.19 & 15.15 ± 2.29 \\
\bottomrule
\end{tabular}
\end{table}

\end{document}